\theoremstyle{plain}
\newtheorem{theorem}{Theorem}
\newtheorem{corollary}[theorem]{Corollary}
\newtheorem{lemma}[theorem]{Lemma}
\newtheorem{proposition}[theorem]{Proposition}
\theoremstyle{remark}
\newtheorem*{remark*}{Remark}
\newcommand\CC{{\mathbb C}}
\newcommand\RR{{\mathbb R}}
\newcommand\into{\int_\Omega}
\newcommand\inth{\int_{\Omega^*}}
\renewcommand\[{\begin{equation}}
\renewcommand\]{\end{equation}}
\renewcommand\a{\mathfrak a}
\newcommand\elb{e_{\lambda,b}}
\newcommand\emlb{e_{-\lambda,b}}
\newcommand\WW{\mathcal W}
\newcommand\spr[1]{\langle#1\rangle}
\newcommand\wt{\widetilde}
\newcommand\elpbp{e_{\lambda',b'}}
\newcommand\emlpbp{e_{-\lambda',b'}}
\newcommand\intah{\int_{\a^*}}
\renewcommand\AA{\mathcal A}
\newcommand\II{\mathcal I}
\newcommand\tD{\widetilde\Delta}
\renewcommand\d{\mathbf d}
\newcommand\BB{\mathcal B}
\begin{document}

\title[Wigner transform]{Wigner transform and pseudodifferential\\
  operators on symmetric spaces\\of non-compact type}

\author{S.~Twareque Ali}
\address{Department of Mathematics and Statistics, Concordia University,
 Montr\'eal, \newline Qu\'ebec, Canada~H4B~1R6}
\email{stali{@}mathstat.concordia.ca}

\author{Miroslav~Englis}
\address{Mathematics Institute, \v Zitn\'a~25, 11567 Prague 1,
 Czech Republic\newline {\rm and} Mathematics Institute, Na~Rybn\'\i\v cku~1,
 74601~Opava, Czech Republic}
\email{englis{@}math.cas.cz}
\thanks{The second author was supported by GA~\v CR grant no.~201/09/0473
and AV~\v CR research plan no.~AV0Z10190503.}

\maketitle
\begin{abstract}We obtain a general expression for a Wigner transform (Wigner function) on symmetric spaces of non-compact type and study the Weyl calculus of pseudodifferential operators on them.
\end{abstract}

\section{Introduction}
The~Wigner transform and the Weyl calculus of pseudodifferential operators
have long played prominent roles in PDE theory \cite{HormW}~\cite{Shub},
time-frequency analysis \cite{Groch} \cite{Wong} \cite{Foll} and mathematical
physics~\cite{Voros}. As~their definition relies on the Fourier transform,
it~is not surprising that they have been studied most extensively in the
context of the Euclidean $n$-space. The~aim of this paper is to extend
these notions to a more general context where a version of the Fourier
transform is available: namely, to~symmetric spaces of non-compact~type,
with the Fourier-Helgason transform.

There have been several efforts in this direction before in the literature.
First of~all, there is an extensive theory of Weyl calculi for which the
symmetric domains are the phase spaces; these are special cases of the
so-called ``invariant operator calculi'' developed recently by Arazy
and Upmeier~\cite{AU}. (It~should be noted that these calculi seem
not to involve any analogue of the Wigner transform.) Our~goal here is
different in that we have the symmetric domains only as the configuration
space, i.e.~the Wigner transform and symbols of the Weyl operators are
functions on the cotangent bundles of the symmetric domains (or,~more
precisely, on the products $\Omega\times\Omega^*$, where $\Omega^*$ is
the Fourier-Helgason dual of the symmetric space~$\Omega$; the~latter
product is essentially isomorphic to the cotangent bundle~$T^*\Omega$).
In~this direction, Tate~\cite{Tate} studied the situation for the simplest
complex bounded symmetric domain, the~unit disc; generalization to the unit
ball of $\RR^n$ (realized as one-sheeted hyperboloid in~$\RR^{n+1}$)
has then been carried out by Bertola and the first author~\cite{AB}.
We~also mention that apparently yet another
kind of the Weyl calculus for the disc, for~which the symbols also live
on the tangent bundle of the disc and which ultimately leads to the
occurrence of Bessel functions, was~introduced by Terras~\cite{Terr}
and studied by Trimeche~\cite{Tri} or Peng and Zhao~\cite{PeZh}; it~seems
unclear whether this calculus is in any way related to Tate's and ours.
(We~pause to remark that the Bessel-function Weyl calculus, however,
seems to have rather complicated behaviour under holomorphic
transformations of the unit disc.)

   In the physical literature there have been several different generalizations
of the original Wigner function \cite{wig} to non-flat configuration spaces and their phase spaces. One approach towards a generalization exploits the fact that that the original Wigner function lives on a coadjoint orbit of the Weyl-Heisenberg group and can be obtained using the
square-integrability property of its representations. A general description of this method, exploiting square-integrable group representations, may be found in \cite{krasali} and earlier references cited therein. An approach that is very close to the one adopted in the present paper has been used in \cite{alopowo1,alopowo2} to obtain Wigner functions on hyperboloids and spheres. However, the  results obtained there were on a case by case basis, while we present here a general theory. Another suggestion for a generaliztion, using the
entire dual space of the Weyl-Heisenberg group has been given in \cite{ibomanmarsiven}. The virtue of our present approach lies in its generality and the fact that our construction preserves both the marginality and unitarity properties that allowed the original Wigner function to be interpreted as a pseudo-probability distribution on phase space.

The~Wigner transform is constructed in Section~3 below, after reviewing
the necessary prerequisites on symmetric spaces in Section~2.
The~non-Euclidean Weyl calculus of pseudodifferential operators is introduced
in Section~4. The invertibility of the Wigner transform and its unitarity are
discussed in Sections~5 and~6, respectively. The~final Section~7 contains
miscellaneous concluding remarks, open problems,~etc. For~the most part,
our~approach parallels fairly directly that of Tate's in~\cite{Tate};
however, Theorem~\ref{Th9} and Corollary~\ref{CoroAA} seem to be new
even for his situation of the unit disc.

\textsc{Acknowledgement.} Large part of this work was done while the second
author was visiting the first; the~hospitality of the mathematics department
of Concordia University on this occasion is gratefully acknowledged.

\section{Bounded symmetric domains}
Recall that a connected Riemannian manifold $\Omega$ of dimension~$d$ is
called a \emph{symmetric space} if for any $x\in\Omega$ there exists a
(necessarily unique) element $s_x\in G$, the~group of isometries of~$\Omega$,
which is involutive (i.e.~$s_x\circ s_x=\text{id}$) and has $x$ as an isolated
fixed-point. One~calls $s_x$ the geodesic symmetry at~$x$. The~symmetric space
is called \emph{irreducible} if~it is not isomorphic to a Cartesian product
of another two symmetric spaces. Irreducible symmetric spaces come in
three types: \emph{Euclidean} (these are just $\RR^d$ and its quotients),
of~\emph{compact type} (the~compact ones) and of \emph{non-compact type}.
Any~symmetric space of non-compact type can be realized~as
(i.e.~is~isomorphic~to) a~domain in $\RR^d$ which is circular with respect
to the origin and convex (the~so-called Harish-Chandra realization).
Throughout the rest of this paper, we~will thus assume that $\Omega$
is of the latter form, i.e.~a~symmetric space of non-compact type in its
Harish-Chandra realization.

It~turns out that the geodesic symmetries $s_x$ in fact act transitively
on~$\Omega$, i.e.~for any $y,z\in\Omega$ there exists an $x\in\Omega$ such
that $s_x y=z$; denoting by $K=\{g\in G:g(0)=0\}$ the stabilizer in $G$ of
the origin $0\in\Omega$, it~therefore follows that $\Omega$ is isomorphic
to the coset space~$G/K$. (It~is also true that elements of $K$ are orthogonal
maps on $\RR^d$ that preserve~$\Omega$, and that $K$ is a maximal compact
subgroup of~$G$.) There exists a unique (up~to constant multiples)
$G$-invariant measure on~$\Omega$ (obtained as the projection of the Haar
measure on~$G$); we~will denote it by~$d\mu(z)$. (Thus $d\mu(z)=d\mu(g(z))$
for any $g\in G$.)

For $x\in\Omega$, there exists a unique geodesic symmetry $\phi_x\in G$ which
interchanges $x$ and the origin,~i.e.
\[ \phi_x\circ\phi_x=\text{id}, \; \phi_x(0)=x, \; \phi_x(x)=0, \label{tSX} \]
and $\phi_x$ has only isolated fixed-points. In~fact, $\phi_x$ has only one
fixed point, namely the geodesic mid-point between $0$ and~$x$; we~will denote,
quite generally, the~geodesic mid-point between some given $x,y\in\Omega$
by~$m_{x,y}$ or $m_{xy}$. (Thus the fixed point of $\phi_x$ is
precisely~$m_{x,0}$, and $\phi_x=s_{m_{x,0}}$.)

Employing the standard notation, let $G=NAK$ be the Iwasawa decomposition
of~$G$, $\a$~the Lie algebra of the maximal Abelian part~$A$, $\a^*$~its dual,
$r=\dim_\RR\a$ its~dimension (known as the \emph{rank} of~$\Omega$),
$\rho=(\rho_1,\dots,\rho_r)\in\a^*$ the sum of positive roots,
$M$ and $M'$ the centralizer and the normalizer of $A$ in~$K$, respectively,
and $W=M'/M$~the Weyl group.
For any $\lambda\in\a^*\cong\RR^r$ and $b$ in the coset space $B:=K/M=G/MAN$,
one defines the ``plane waves'' on~$\Omega$~by
$$ \elb(x):= e^{(i\lambda+\rho)(A(x,b))}, \qquad  x\in\Omega, $$
where $A(x,b)$ is the unique element of $\a$ satisfying, if $b=kM$ and $x=gK$,
$$ k^{-1}g \in  N \exp A(x,b) \, K  $$
under the Iwasawa decomposition $G=NAK$.

The Helgason-Fourier transform of $f\in C_0^\infty(\Omega)$ is a function
on $\Omega^*:=\a^*\times B$ ($\cong\RR^r\times K/M$) given~by
$$ \tilde f(\lambda,b):= \into f(x) \emlb(x) \,d\mu(x) . $$
For any $f\in C_0^\infty(\Omega)$ we then have the Fourier inversion formula
$$ f(x)=\intah \int_B \tilde f(\lambda,b) \elb(x) \, d\rho(\lambda,b)  $$
and the Plancherel theorem
$$ \into |f(x)|^2 \,d\mu(x) = \intah \int_B |\tilde f(\lambda,b)|^2
 d\rho(\lambda,b) . $$
Here
$$ d\rho(\lambda,b) := |c(\lambda)|^{-2} \, db \, d\lambda,  $$
where $db$ is the unique $K$-invariant probability measure on $K/M$,
$d\lambda$~is a suitably normalized Lebesgue measure on $\a^*\cong\RR^r$,
and $c(\lambda)$ is a certain meromorphic function on the complexification
$\a^{*\CC}\cong\CC^r$ of~$\a^*$ (the~Harish-Chandra~$c$-function).
From the Plancherel theorem it can be deduced, in particular, that $f\mapsto
\tilde f$ extends to a Hilbert space isomorphism of $L^2(d\mu)$ into $L^2
(\Omega^*,d\rho)$ whose image consists of functions $F(\lambda,b)$ which
satisfy a certain symmetry condition (relating the values $F(\lambda,b)$ and
$F(s\lambda,b)$ for $s$ in the Weyl group; see Corollary~VI.3.9 in~\cite{H7}.)

A~(linear) differential operator $L$ on $\Omega$ is called $G$-invariant~if
$$ L(f\circ g)=(Lf)\circ g  $$
for any $f\in C^\infty(\Omega)$ and any $g\in G$. For~any such~$L$, it~is
known that the ``plane waves'' are eigenfunctions of~$L$:
$$ L \elb = \tilde L(\lambda) \elb  $$
where $\tilde L(\lambda)$ is a polynomial in $r$ variables; that~is, each such
$L$ is a Fourier multiplier with respect to the Helgason-Fourier transform.
The~correspondence $L\mapsto\tilde L$ sets up an isomorphism between the ring
of all $G$-invariant differential operators on $\Omega$ and the ring of all
polynomials on $\RR^r\cong\a^*$ invariant under the Weyl group $W$.

The~``plane waves'' $\elb$ obey the following transformation rule under
composition with elements of~$G$:
\[  \elb\circ g = \elb(g0) \; e_{\lambda,g^{-1}b}.   \label{tET}   \]
(We~will often write $g0,gz,$~etc.~instead of $g(0),g(z)$~etc.)
It~follows from here that
$$ e_{\lambda,gb}(g0) \elb(g^{-1}0) = 1   $$
and
\[   \begin{aligned}
 d\rho(\lambda,gb) &= |\elb(g^{-1}0)|^2 \, d\rho(\lambda,b), \\
 d\rho(\lambda,b) &= |e_{\lambda,gb}(g0)|^2 \,d\rho(\lambda,gb).
 \end{aligned}   \label{tRH}    \]
Indeed, from the formula for the Helgason-Fourier transform and (\ref{tET})
we have
\begin{align*}
\tilde f(\lambda,gb) &= \into f(z) \, e_{-\lambda,gb}(z) \, d\mu(z)   \\
&= \into f(z) \, \frac{\emlb(g^{-1}z)}{\emlb(g^{-1}0)} \, d\mu(z) \\
&= \frac1{\emlb(g^{-1}0)} \into f(gz) \, \emlb(z) \, d\mu(z)  \\
&= \frac{(f\circ g)^\sim (\lambda,b)}{\emlb(g^{-1}0)},  \end{align*}
whence from
\begin{align*}
f(z) &= \inth \tilde f(\lambda,b) \,\elb(z) \,d\rho(\lambda,b)\\
&= \inth \tilde f(\lambda,gb) \,e_{\lambda,gb}(z) \,d\rho(\lambda,gb) \\
&= \inth \frac{(f\circ g)^\sim(\lambda,b)}{\emlb(g^{-1}0)} \;
   \frac{\elb(g^{-1}z)}{\elb(g^{-1}0)} \, d\rho(\lambda,gb)   \end{align*}
we~get, upon replacing $f$ by $f\circ g^{-1}$ and $z$ by~$gz$,
$$ f(z) = \inth \frac{\tilde f(\lambda,b)}{\emlb(g^{-1}0)} \;
   \frac{\elb(z)}{\elb(g^{-1}0)} \, d\rho(\lambda,gb),   $$
proving the claim. (Note that $\emlb=\overline{\elb}$.)

Since $|\elb(x)|^2=e^{2\rho(A(x,b))}$ does not depend on~$\lambda$,
(\ref{tRH})~in~fact implies that
\[ \begin{aligned}
 d(gb) &= |\elb(g^{-1}0)|^2 \, db, \\
 db &= |e_{\lambda,gb}(g0)|^2 \, d(gb).
\end{aligned}  \label{tEX}   \]

A~function $f$ on $\Omega$ is called \emph{$K$-invariant} if $f(kx)=f(x)$
for all $x\in\Omega$ and $k\in K$. For~such functions, the Helgason-Fourier
transform $\tilde f(\lambda,b)$ does not depend on~$b$, and reduces to the
\emph{spherical transform}
$$ \tilde f(\lambda) = \into f(z) \, \Phi_{-\lambda}(z) \,d\mu(z),  $$
where $\Phi_\lambda$ are the \emph{spherical functions}
$$ \Phi_\lambda(z) := \int_K \elb(kz) \,dk = \int_K e_{\lambda,kb}(z) \,dk. $$
One~has $\Phi_\lambda=\Phi_{s\lambda}$ for all $s$ in the Weyl group,
i.e.~$\tilde f$ is $W$-invariant. The~Fourier inversion formula and the
Plancherel theorem assume the form
\begin{gather}
f(z) = \intah \tilde f(\lambda) \, \Phi_\lambda(z) \, d\rho(\lambda),
 \label{tSPH}  \\
\into |f(z)|^2 \, d\mu(z) = \intah |\tilde f(\lambda)|^2 \,d\rho(\lambda),
 \nonumber
\end{gather}
respectively, where (abusing notation a little)
$$ d\rho(\lambda) := |c(\lambda)|^{-2} \, d\lambda.  $$

\noindent{\bf Some examples.}
{\bf 1.} The~absolutely simplest example of the type of symmetric space studied
here could be the unit interval $\Omega=(-1,1)\subset\RR$, on~which $G=O(1,1)/
\RR$ acts~by
$$ gx = \frac{ax+b}{cx+d}, \qquad x\in\Omega, \;
 g=\begin{pmatrix} a&b\\c&d\end{pmatrix} \in O(1,1),  $$
that~is,
\[ gx = \epsilon \frac{x \cosh t+\sinh t}{x\sinh t+\cosh t}, \qquad
 x\in\Omega, \; t\in\RR, \; \epsilon\in\{\pm1\}.  \label{eEG}   \]
In~particular,
$$ \phi_a x = \frac{a-x}{1-ax}, \qquad x,a\in\Omega.   $$
The stabilizer of the origin is $K=O(1)=\{\pm1\}$, the invariant measure
is $d\mu(x)=\frac{dx}{1-x^2}$, and $N=\{1\}$, $A=G$, $M=K$, $B=\{1\}$.
The~Lie algebra $\mathfrak g$ can be identified with~$\RR$, and the exponential
map $\mathfrak g\to G$ is
\[ \xi \longmapsto \tanh \xi.   \label{eEX}  \]
It~follows that
$$ \elb(x) = \Big( \frac{1+x}{1-x} \Big) ^{i\lambda/2}, \qquad\lambda\in\RR. $$
The invariant differential operators on $\Omega$ are precisely the polynomials
in $\tD:=((1-x^2)\frac{\partial}{\partial x})^2$, and
$$ \tD \elb = -\lambda^2 \elb.  $$
However, this example is not really a symmetric space of noncompact type,
since, by dimensional reasons, the Lie algebra $\mathfrak g$ is necessarily
abelian and thus $\Omega$ is actually a Euclidean space. In~fact,
the~exponential map (\ref{eEX}) gives an isomorphism of $\RR$ onto $\Omega$
under which the action (\ref{eEG}) becomes just the Euclidean motion
$\xi\mapsto\epsilon(\xi+t)$, $d\mu(x)$~becomes the Lebesgue measure~$d\xi$,
$\tD$~becomes $\partial^2/\partial\xi^2$, and $\elb(x)$ reduces to the ordinary
exponential $e^{i\lambda\xi}$. Since the Weyl group is just $W=\{\pm1\}$ while
$\rho=0$ and $c(\lambda)\equiv1$, the Helgason-Fourier transform on $\Omega$
thus reduces just to the ordinary Fourier transform on~$\RR$.

{\bf 2.} The~simplest genuine example is thus the unit disc $\Omega=\{z\in\CC
\cong\RR^2: |z|<1\}$, considered by Tate~\cite{Tate}. In~this case $\Omega=
G/K$ with $G=U(1,1)/\CC$ acting again~by
$$ gz = \frac{az+b}{cz+d}, \qquad z\in\Omega, \;
 g=\begin{pmatrix} a&b \\c&d \end{pmatrix} \in U(1,1),   $$
and $K=U(1)$,
$A=\{\begin{pmatrix} \cosh t&\sinh t\\ \sinh t & \cosh t\end{pmatrix}:
t\in\RR\}$ is~the same as in the preceding example, $M=\{1\}$, $W=\{\pm1\}$
and $\rho=1$. The~geodesic symmetries are given~by
$$ \phi_a z = \frac{a-z}{1-\overline a z}.   $$
The~quotient space $B=K/M$ can be identified with the unit circle~$\mathbf T$,
and
$$ \elb(z) = \Big( \frac{1-|z|^2}{|z-b|^2} \Big) ^{\frac{1+i\lambda}2},
 \qquad \lambda\in\RR, \; b\in\mathbf T, \; z\in\Omega.  $$
The invariant measure is $d\mu(z)=(1-|z|^2)^{-2}\,dz\wedge d\overline z$,
the invariant differential operators are precisely the polynomials in
$\tD:=(1-|z|^2)^2\Delta$, where $\Delta$ is the ordinary Laplace operator, and
$$ \tD \elb = -(\lambda^2+1) \elb .   $$
The~Plancherel measure $d\rho$ is given by $d\rho(\lambda)=\frac\lambda{4\pi}
\tanh\frac{\pi\lambda}2\,d\lambda$, yielding the simplest nontrivial example
of the Helgason-Fourier transform.

{\bf 3.} The real hyperbolic $n$-space, modelled in \cite{AB} as one-sheeted
hyperboloid, can~be realized as the unit ball $\Omega=\{x\in\RR^n:|x|<1\}=G/K$
with $G=O(n,1)/\RR$, $K=O(n)$. The geodesic symmetries are the Moebius maps
$$ \phi_a x = \frac{(1-2\spr{a,x}+|x|^2)a-(1-|a|^2)x}{1-2\spr{a,x}+|a|^2|x|^2},
 \qquad x,a\in\Omega ;  $$
the maximal abelian subgroup $A$ can be identified with $\{\tau_a: a=re_1,\;
-1<r<1\}$ where $\tau_a(x):=\phi_a(-x)$ and $e_1=(1,0,0,\dots,0)$; and
$M=\{k\in K: ke_1=e_1\}\cong O(n-1)$, so~that $B=K/M$ can again be identified
with the unit sphere $\partial\Omega=\mathbf S^{n-1}$. The~Weyl group $W$ is
again just $\{\pm1\}$, the sum of positive roots is $\rho=n-1$, and the
``plane waves'' are
$$ \elb(x) = \Big( \frac{1-|x|^2}{|x-b|^2} \Big)^{\frac{n-1+i\lambda}2},
 \qquad x\in\Omega, \; b\in\partial\Omega, \; \lambda\in\RR.  $$
The invariant differential operators are precisely the polynomials~in
$$ \tD := (1-|x|^2) \Big[ (1-|x|^2) \sum_{j=1}^n \frac{\partial^2}
{\partial x_j^2} + (2n-4)\sum_{j=1}^n x_j \frac\partial{\partial x_j} \Big], $$
and $\tD\elb=-(\lambda^2+(n-1)^2)\elb$. Note that for $n=1$ and $n=2$, this
example recovers the previous two as special cases.

{\bf 4.} All~three examples above are in turn special cases of the unit ball
of real $n\times m$ matrices
$$ \Omega = \{ z\in\RR^{n\times m}: I-z^t z\text{ is positive definite}\}  $$
(or,~equivalently, $\|z\|<1$ when $z$ is viewed as an operator
$z:\RR^m\to\RR^n$). One~has $\Omega=G/K$ with $G=O(n,m)/\RR$ acting~by
$$ gz = (az+b)(cz+d)^{-1}, \qquad z\in\Omega, \;
 g=\begin{pmatrix} a&b \\ c&d \end{pmatrix} \in O(n,m)  $$
(with $a\in\RR^{n\times n}$, $b\in\RR^{n\times m}$, etc.). The~stabilizer
subgroup $K$ consists of all block-diagonal ($b=c=0$) elements in~$G$, while
$A$ can be taken as $\{\tau_a: a=\sum_j r_je_j, \; -1<r_j<1\}$, where
$\tau_a(x):=\phi_a(-x)$ and $e_j$ is the $n\times m$ matrix with 1 on the
$(j,j)$-position and 0 everywhere else, $1\le j\le\min(m,n)$. In~particular,
the rank of $\Omega$ is $r=\min(m,n)$. (The~previous three examples,
corresponding to~$m=1$, were thus of rank~1.)

{\bf 5.} General symmetric spaces of non-compact type include, in~addition to
analogous unit balls of symmetric or anti-symmetric matrices, also some other
infinite series of matrix domains, as~well as so-called ``exceptional''
symmetric domains related to (some) exceptional Lie groups.

\smallskip

For~more details and the proofs of all the above, as~well as for the
complete classification (up~to isomorphism) of all symmetric spaces,
we~refer e.g.~to Helgason's books~\cite{H7},~\cite{HeGA},~\cite{HeDS}.

\section{Wigner transform}
Recall that $m_{x,y}$ stands for the geodesic midpoint between two points
$x,y$ of~$\Omega$. We~begin by establishing a few properties of the Jacobian
$J(x,y)$ of this map, defined by the following equality
\[  \into f(m_{z,y}) \, d\mu(z) = \into f(x) \; J(x,y) \, d\mu(x).
  \label{tJA}  \]

\begin{proposition} For any $g\in G$, $J(gx,gy)=J(x,y)$.
\end{proposition}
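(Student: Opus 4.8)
The plan is to deduce this from the defining relation~(\ref{tJA}) by changes of variable, using two ingredients. The first is geometric: every $g\in G$ is an isometry of~$\Omega$, hence carries the geodesic segment from~$x$ to~$y$ onto the one from~$gx$ to~$gy$ preserving arclength, so it maps geodesic midpoints to geodesic midpoints, $m_{gx,gy}=g(m_{x,y})$. The second is that the measure $d\mu$ is $G$-invariant. I would also note at the outset that, for fixed~$y$, the map $z\mapsto m_{z,y}$ is a diffeomorphism of~$\Omega$ onto itself, with inverse $x\mapsto s_x(y)$ (because $m_{z,y}=x$ says precisely that the geodesic symmetry $s_x$ interchanges $z$ and~$y$); thus $J(\cdot,y)$ is a genuine smooth positive Jacobian and~(\ref{tJA}) determines it unambiguously.

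The computation then runs as follows. Write~(\ref{tJA}) with~$y$ replaced by~$gy$:
$$ \into f(m_{z,gy}) \, d\mu(z) = \into f(x) \, J(x,gy) \, d\mu(x) . $$
In the left-hand integral substitute $z\mapsto gz$; by $G$-invariance of~$d\mu$ the measure is unchanged, and $m_{gz,gy}=g(m_{z,y})$, so the left-hand side equals $\into (f\circ g)(m_{z,y})\,d\mu(z)$. Now apply~(\ref{tJA}) again, this time to the test function $f\circ g$ with second argument~$y$: this turns it into $\into (f\circ g)(x)\,J(x,y)\,d\mu(x)$, and a further measure-preserving substitution $x\mapsto g^{-1}x$ rewrites it as $\into f(x)\,J(g^{-1}x,y)\,d\mu(x)$. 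Comparing with the right-hand side above gives
$$ \into f(x) \, J(g^{-1}x,y) \, d\mu(x) = \into f(x) \, J(x,gy) \, d\mu(x) $$
for all $f\in C_0^\infty(\Omega)$, whence $J(g^{-1}x,y)=J(x,gy)$ for almost every~$x$, and so for every~$x$ by continuity of~$J$. Replacing $x$ by~$gx$ yields $J(x,y)=J(gx,gy)$, which is the claim since $g\in G$ is arbitrary.

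I do not anticipate any serious obstacle: the argument is bookkeeping with invariant integrals, and its only non-formal input is the identity $m_{gx,gy}=g(m_{x,y})$, which is an immediate consequence of $g$ being an isometry. The single point deserving a word of care is the diffeomorphism property of $z\mapsto m_{z,y}$, which guarantees that~(\ref{tJA}) really does define a function~$J$, together with the routine passage from equality of integrals against all test functions to pointwise equality of the (continuous) integrands.
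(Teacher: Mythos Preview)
Your proof is correct and follows essentially the same approach as the paper's: both arguments use the defining relation~(\ref{tJA}), the $G$-invariance of~$d\mu$, and the identity $m_{gx,gy}=g\,m_{x,y}$ to show that the two Jacobians integrate against every test function to the same value. The only cosmetic difference is the order of substitutions; your additional remarks on why $J$ is well-defined (via the diffeomorphism $z\mapsto m_{z,y}$ with inverse $x\mapsto s_x y$) are a welcome clarification that the paper leaves implicit.
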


\begin{proof} From the definition of $J$ and invariance of $d\mu$ we~get
\begin{align*}
\into f(x) \, J(gx,gy) \, d\mu(x)
&= \into f(x) \, J(gx,gy) \, d\mu(gx) \\
&= \into f\circ g^{-1}(x) \, J(x,gy) \, d\mu(x)  \\
&= \into f\circ g^{-1}(m_{z,gy}) \, d\mu(z)  \\
&= \into f\circ g^{-1}(gm_{g^{-1}z,y}) \, d\mu(z)  \\
&= \into f(m_{g^{-1}z,y}) \, d\mu(z)  \\
&= \into f(m_{z,y}) \, d\mu(z)  \\
&= \into f(x) \, J(x,y) \, d\mu(x) ,   \end{align*}
where the fourth equality follows from the fact that $m_{gx,gy}=gm_{x,y}$.
\end{proof}

\begin{corollary} $J(x,y)=J(y,x)$.    \end{corollary}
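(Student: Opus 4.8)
The plan is to obtain the identity at once from the Proposition, by taking $g$ to be the geodesic symmetry at the midpoint $m_{x,y}$.

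First I would observe that, by the defining property of a symmetric space recalled in Section~2, the geodesic symmetry $s_{m_{x,y}}$ belongs to~$G$. The geometric input I need is that $s_{m_{x,y}}$ interchanges $x$ and~$y$. This is the counterpart, with the origin replaced by an arbitrary point, of the relations $\phi_x(0)=x$, $\phi_x(x)=0$ satisfied by $\phi_x=s_{m_{x,0}}$: an involutive isometry $s_p$ having $p$ as an isolated fixed point must have differential $-\mathrm{id}$ at~$p$ (otherwise a $(+1)$-eigenvector would, via the exponential map, generate a whole curve of fixed points through~$p$), and hence reverses every geodesic through~$p$; since $\Omega$ is a Hadamard manifold there is a unique geodesic joining $x$ and~$y$, of which $m_{x,y}$ is by definition the midpoint, so $s_{m_{x,y}}$ carries $x$ to $y$ and $y$ to~$x$.

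Granting this, put $g:=s_{m_{x,y}}$, so that $gx=y$ and $gy=x$; the Proposition then gives
$$ J(y,x)=J(gx,gy)=J(x,y), $$
which is the assertion. The proof is thus essentially a one-line substitution, and the only point requiring any argument --- the fact that $s_{m_{x,y}}$ swaps $x$ and~$y$ --- is a standard property of geodesic symmetries on a symmetric space, so I do not anticipate a genuine obstacle here.
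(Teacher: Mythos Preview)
Your proof is correct and follows exactly the paper's approach: the paper simply says ``Take for $g$ the geodesic symmetry interchanging $x$ and~$y$,'' and you have spelled out in detail that this $g$ is $s_{m_{x,y}}$ and why it swaps $x$ and $y$. The extra justification you supply (that $d s_p|_p=-\mathrm{id}$ and hence $s_p$ reverses geodesics through~$p$) is standard and harmless, so there is no gap.
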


\begin{proof} Take for $g$ the geodesic symmetry interchanging $x$ and~$y$.
\end{proof}

Now~let $F$ be a function on $\Omega\times\Omega$. The~\emph{Wigner transform}
$\WW_F:\Omega\times\Omega^*\to\CC$ of $F$ is defined~by
\begin{align*}
\WW_F(x;\lambda,b) :&= |\elb(x)|^{-2} \into \elb(y) \; \emlb(s_x y) \;
   F(s_x y,y) \; J(x,y) \, d\mu(y)   \\
&= |\elb(x)|^{-2} \into \elb(s_x y) \; \emlb(y) \;
   F(y,s_x y) \; J(x,y) \, d\mu(y) .   \end{align*}
The~second expression follows from the first upon the change of variable
$y\mapsto s_x y$ and noting that $J(x,y)=J(s_xx,s_xy)=J(x,s_xy)$ by the
preceding proposition.

Note that the quantity $|\elb(x)|^{-2}$~is, in~fact, independent of~$\lambda$.

The~next three theorems show that our Wigner transform retains the properties
we expect from the Euclidean case.

\begin{theorem} {\rm(Invariance)} For any $g\in G$,
$$ \WW_{F\circ g}(x;\lambda,b) = \WW_F(gx;\lambda,gb),   $$
where $F\circ g(x,y):=F(gx,gy)$.   \end{theorem}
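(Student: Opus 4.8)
The plan is to expand $\WW_{F\circ g}(x;\lambda,b)$ directly from the definition, substitute $F\circ g(s_x y,y)=F(g s_x y, gy)$, and then perform the change of variables $y\mapsto g^{-1}y$ in the integral over $\Omega$, using $G$-invariance of $d\mu$. The three ingredients needed are: (i) the conjugation identity for geodesic symmetries $g\circ s_x=s_{gx}\circ g$ (equivalently $g s_x g^{-1}=s_{gx}$), which is immediate since $g s_x g^{-1}$ is an involutive isometry with isolated fixed point $gx$; (ii) the transformation rule \eqref{tET} for the plane waves, $\elb\circ g = \elb(g0)\,e_{\lambda,g^{-1}b}$, together with its consequence $e_{\lambda,gb}(g0)\elb(g^{-1}0)=1$; and (iii) invariance of the Jacobian, $J(gx,gy)=J(x,y)$, from the Proposition above.

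Concretely, starting from
$$\WW_{F\circ g}(x;\lambda,b)=|\elb(x)|^{-2}\into \elb(y)\,\emlb(s_x y)\,F(g s_x y,gy)\,J(x,y)\,d\mu(y),$$
I would replace $y$ by $g^{-1}y$ so that the measure is preserved and $g s_x g^{-1}y = s_{gx}y$ appears, giving
$$|\elb(x)|^{-2}\into \elb(g^{-1}y)\,\emlb(g^{-1}s_{gx}y)\,F(s_{gx}y,y)\,J(x,g^{-1}y)\,d\mu(y).$$
Now $J(x,g^{-1}y)=J(gx,y)$ by the Proposition, and by \eqref{tET} one has $\elb(g^{-1}y)=\elb(g^{-1}0)\,e_{\lambda,gb}(y)$ and likewise $\emlb(g^{-1}s_{gx}y)=\emlb(g^{-1}0)\,e_{-\lambda,gb}(s_{gx}y)$, so the two scalar prefactors combine into $\elb(g^{-1}0)\emlb(g^{-1}0)=|\elb(g^{-1}0)|^2$. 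Pulling this out, the integral becomes
$$|\elb(x)|^{-2}\,|\elb(g^{-1}0)|^2 \into e_{\lambda,gb}(y)\,e_{-\lambda,gb}(s_{gx}y)\,F(s_{gx}y,y)\,J(gx,y)\,d\mu(y).$$
Comparing with the definition of $\WW_F(gx;\lambda,gb)$, the remaining integral is exactly $|e_{\lambda,gb}(gx)|^2\,\WW_F(gx;\lambda,gb)$, so it remains to check the prefactor identity $|\elb(x)|^{-2}|\elb(g^{-1}0)|^2 = |e_{\lambda,gb}(gx)|^{-2}$.

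The main obstacle — really the only nontrivial point — is this last prefactor bookkeeping. It follows from \eqref{tET} applied at the point $gx$: namely $\elb(gx)=\elb(g0)\,e_{\lambda,g^{-1}b}(x)$ is not quite what is wanted, so instead I apply \eqref{tET} with $g$ replaced by $g^{-1}$ and $b$ replaced by $gb$, obtaining $e_{\lambda,gb}(g^{-1}w)=e_{\lambda,gb}(g^{-1}0)\,e_{\lambda, b}(w)$; setting $w=gx$ yields $e_{\lambda,gb}(x)=e_{\lambda,gb}(g^{-1}0)\,\elb(gx)$, hence $|e_{\lambda,gb}(gx)|^2=|e_{\lambda,gb}(x)|^2/|e_{\lambda,gb}(g^{-1}0)|^2$. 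Wait — one must be slightly careful about base points; the clean way is to use the already-established consequence $e_{\lambda,gb}(g0)\elb(g^{-1}0)=1$ and the cocycle form of \eqref{tET} to verify $|e_{\lambda,gb}(gx)|^2 = |\elb(x)|^2/|\elb(g^{-1}0)|^2$, since both sides equal $e^{2\rho(A(\cdot,\cdot))}$ evaluated appropriately and $A(gx,gb)-A(x,b)$ depends only on $g$. Once this identity is in hand the two $|e_{\lambda,gb}(gx)|$-factors cancel and the proof is complete; everything else is a mechanical change of variables.
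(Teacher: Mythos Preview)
Your proof is correct and follows essentially the same route as the paper's: the same change of variable $y\mapsto g^{-1}y$, the conjugation identity $g\,s_x\,g^{-1}=s_{gx}$, the $G$-invariance of $J$, and the transformation rule~\eqref{tET} for the plane waves. The only cosmetic difference is that the paper computes $\WW_{F\circ g}(g^{-1}x;\lambda,b)$ rather than $\WW_{F\circ g}(x;\lambda,b)$, which makes your ``only nontrivial point'' disappear --- applying~\eqref{tET} once to the prefactor $|\elb(g^{-1}x)|^{-2}$ already yields $|\elb(g^{-1}0)|^{-2}|e_{\lambda,gb}(x)|^{-2}$, and the first factor cancels against the $|\elb(g^{-1}0)|^2$ coming from the integrand; your identity $|e_{\lambda,gb}(gx)|^2=|\elb(x)|^2/|\elb(g^{-1}0)|^2$ is the same fact, obtained most directly by evaluating~\eqref{tET} (with $g$ replaced by $g^{-1}$) at the point~$gx$.
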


\begin{proof} Note that for any $x,y\in\Omega$ and $g\in G$,
$$ s_{gx}gy=g s_x y.  $$
Using~the definition of~$\WW$, the invariance of~$d\mu$ and~$J$, and
(\ref{tET}), we~therefore have
\begin{align*}
& \WW_{F\circ g}(g^{-1}x;\lambda,b) \\
&\qquad
 = |\elb(g^{-1}x)|^{-2} \into \elb(y) \emlb(s_{g^{-1}x}y) F(gs_{g^{-1}x}y,gy)
 J(g^{-1}x,y) \, d\mu(y)  \\
&\qquad= |\elb(g^{-1}x)|^{-2} \into \elb(g^{-1}y) \emlb(s_{g^{-1}x}g^{-1}y)
 F(gs_{g^{-1}x}g^{-1}y,y) \\
&\hskip24em
 J(g^{-1}x,g^{-1}y) \, d\mu(y)  \\
&\qquad= |\elb(g^{-1}x)|^{-2} \into \elb(g^{-1}y) \emlb(g^{-1}s_x y)
 F(s_x y,y) J(x,y) \, d\mu(y)  \\
&\qquad=
 |\elb(g^{-1}0) e_{\lambda,gb}(x)|^{-2} \into \elb(g^{-1}0) e_{\lambda,gb}(y)
 \emlb(g^{-1}0) e_{-\lambda,gb}(s_x y) \\
&\hskip24em  F(s_x y,y) J(x,y) \, d\mu(y)  \\
&\qquad=
 |e_{\lambda,gb}(x)|^{-2} \into e_{\lambda,gb}(y) e_{-\lambda,gb}(s_x y)
 F(s_x y,y) J(x,y) \, d\mu(y)  \\
&\qquad= \WW_F(x;\lambda,gb),  \end{align*}
as~asserted.   \end{proof}

\begin{theorem} {\rm(Marginality)} For~$F$ of the form $F(x,y)=f(x)\overline
{g(y)}$, with $f,g\in L^2(\Omega,d\mu)$, we~have the marginality relations
\begin{align*}
\into \WW_F(x;\lambda,b) \, |\elb(x)|^2 \, d\mu(x) &=
 \tilde f(\lambda,b) \, \overline{\tilde g(\lambda,b)} ;  \\
\inth \WW_F(x;\lambda,b) \, |\elb(x)|^2 \, d\rho(\lambda,b) &=
 f(x) \, \overline{g(x)}.   \end{align*}
\end{theorem}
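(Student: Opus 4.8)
The plan is to substitute $F(x,y)=f(x)\overline{g(y)}$ into the definition of $\WW_F$, interchange the order of integration, and then evaluate one of the two integrations in closed form; throughout it suffices to take $f,g\in C_0^\infty(\Omega)$ and to pass to the $L^2$-limit at the end. \emph{First relation.} Since $|\elb(x)|^{2}=e^{2\rho(A(x,b))}$ is independent of $\lambda$, the two occurrences of $|\elb(x)|^{\pm2}$ cancel, and after Fubini one has to compute
$$\into\WW_F(x;\lambda,b)\,|\elb(x)|^2\,d\mu(x)=\into\overline{g(y)}\,\elb(y)\Big[\into\emlb(s_xy)\,f(s_xy)\,J(x,y)\,d\mu(x)\Big]d\mu(y).$$
The bracketed integral has the form $\into\psi(x)\,J(x,y)\,d\mu(x)$ with $\psi(x):=\emlb(s_xy)f(s_xy)$, so the defining relation (\ref{tJA}) of $J$ converts it into $\into\psi(m_{z,y})\,d\mu(z)$. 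The key point is that the geodesic symmetry at a midpoint interchanges its endpoints, i.e.\ $s_{m_{z,y}}y=z$ (this is exactly why $\phi_x=s_{m_{x,0}}$); hence $\psi(m_{z,y})=\emlb(z)f(z)$ and the bracket equals $\into\emlb(z)f(z)\,d\mu(z)=\tilde f(\lambda,b)$. Using $\emlb=\overline{\elb}$, the outer integral over $y$ is then $\overline{\into g(y)\,\emlb(y)\,d\mu(y)}=\overline{\tilde g(\lambda,b)}$.

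\emph{Second relation} (the analogue of $\int W(x,\xi)\,d\xi=f(x)\overline{g(x)}$ in the Euclidean case). The same cancellation and Fubini give
$$\inth\WW_F(x;\lambda,b)\,|\elb(x)|^2\,d\rho(\lambda,b)=\into f(s_xy)\,\overline{g(y)}\,J(x,y)\Big[\inth\elb(y)\,\emlb(s_xy)\,d\rho(\lambda,b)\Big]d\mu(y).$$
By the Fourier inversion formula, $(u,v)\mapsto\inth\elb(u)\,\emlb(v)\,d\rho(\lambda,b)$ is the Schwartz kernel, with respect to $d\mu$, of the identity operator on $L^2(\Omega,d\mu)$, and so is carried by the diagonal $u=v$; therefore the bracket above is a distribution in $y$ supported where $y=s_xy$. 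Now $s_x=\phi_x\circ s_0\circ\phi_x^{-1}$ (the geodesic symmetry at $x$), so its only fixed point is $\phi_x(0)=x$, while its differential at $x$ is $-\,\text{id}$; consequently the graph $\{(y,s_xy):y\in\Omega\}$ meets the diagonal transversally at $(x,x)$. Passing to geodesic normal coordinates at $x$ — in which $s_x$ becomes $v\mapsto-v$ and $d\mu$ has unit density at $x$ — the pullback of the diagonal $\delta$ along $v\mapsto2v$ yields, for every test function $h$,
$$\into h(y)\Big[\inth\elb(y)\,\emlb(s_xy)\,d\rho(\lambda,b)\Big]d\mu(y)=J(x,x)^{-1}\,h(x),$$
the emerging constant $2^{-d}$ coinciding with $J(x,x)^{-1}$ because $J(x,x)=2^d$ (this last value drops out of the same normal-coordinate evaluation of (\ref{tJA}), in which the midpoint map $z\mapsto m_{z,x}$ is $v\mapsto v/2$). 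Taking $h(y)=f(s_xy)\overline{g(y)}J(x,y)$ and recalling $s_xx=x$ produces $f(x)\overline{g(x)}$.

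The step I expect to be the real obstacle is this last one: turning ``the identity kernel, restricted to the graph of $s_x$, equals $J(x,x)^{-1}$ times the Dirac mass at $x$'' into a rigorous assertion (the $d\rho$-integral defining the kernel is only a distribution, so the interchange above is to be understood distributionally, after localizing to $f,g\in C_0^\infty$), and checking that the constant is precisely $J(x,x)^{-1}$. It is probably cleanest to first use the invariance property proved above, together with (\ref{tRH}), to reduce to the case $x=0$ — where $s_0$ is the involution $z\mapsto-z$, lies in $K$, and $|\elb(0)|=1$, so that $\inth\elb(y)\,\emlb(s_0y)\,d\rho(\lambda,b)$ is just $\inth\elb(y)\,\overline{\elb(-y)}\,d\rho(\lambda,b)$ — and only then to carry out the local computation, which amounts to the statement that the identity kernel with respect to $d\mu$ is $\delta(u-v)$ divided by the density of $d\mu$ at $v$ in any chart.
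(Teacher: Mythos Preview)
Your argument is the paper's, essentially line for line. For the first relation both of you cancel $|\elb(x)|^{\pm2}$, apply the defining property of $J$ to replace the $x$-integral by a $z$-integral with $x=m_{z,y}$, use $s_{m_{z,y}}y=z$, and read off $\tilde f(\lambda,b)\,\overline{\tilde g(\lambda,b)}$.

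For the second relation the strategy is again the paper's --- recognize $\inth\elb(y)\,\emlb(z)\,d\rho(\lambda,b)$ as the kernel $\delta_{yz}$ of the identity (Plancherel) and evaluate at the unique fixed point $y=x$ of $s_x$ --- but your bookkeeping of the constants is actually more careful. The paper writes, informally, $\into\delta_{y,s_xy}\,h(y)\,d\mu(y)=h(x)$ and then asserts $J(x,x)=J(0,0)=1$ via an approximate-identity argument in the defining relation of~$J$; that last step is not really a computation (it is circular), and a direct normal-coordinate evaluation gives $J(0,0)=2^d$, exactly as you found. Your version correctly picks up the compensating factor $2^{-d}$ from pulling the diagonal $\delta$ back along $y\mapsto(y,s_xy)$, so the product $2^{-d}\cdot J(x,x)=1$ yields the same final answer $f(x)\overline{g(x)}$. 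In short: same route, but your handling of the Jacobians is the rigorous one, and your instinct that this was ``the real obstacle'' was on target.
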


\begin{proof} For~the first, use the defining property (\ref{tJA}) of the
Jacobian:
\begin{align*}
& \into \WW_F(x;\lambda,b) \, |\elb(x)|^2 \, d\mu(x)  \\
&\qquad\qquad =
\into\into \elb(y) \; \emlb(s_xy) \; f(s_xy) \; \overline{g(y)}
 \; J(x,y) \, d\mu(x) \, d\mu(y)  \\
&\qquad\qquad =
\into\into \elb(y) \; \emlb(s_{m_{z,y}}y) \; f(s_{m_{z,y}} y) \;
 \overline{g(y)} \, d\mu(z) \, d\mu(y)  \\
&\qquad\qquad =
\into\into \elb(y) \; \emlb(z) \; f(z) \; \overline{g(y)} \, d\mu(z)
 \, d\mu(y)  \qquad\text{(since } s_{m_{z,y}}y=z)   \\
&\qquad\qquad =
 \tilde f(\lambda,b) \, \overline{\tilde g(\lambda,b)}.   \end{align*}
For the second, note that by Plancherel
\[ \inth \elb(y) \, \emlb(z) \, d\rho(\lambda,b) = \delta_{yz}.  \label{tPL} \]
Thus
\begin{align*}
& \inth \WW_F(x;\lambda,b) \, |\elb(x)|^2 \, d\rho(\lambda,b) \\
&\qquad\qquad =
\into\inth \elb(y) \; \emlb(s_x y) \; f(s_x y) \; \overline{g(y)}
 \; J(x,y) \, d\rho(\lambda,b) \, d\mu(y)  \\
&\qquad\qquad =
\into \delta_{y,s_xy} \; f(s_x y) \; \overline{g(y)} \; J(x,y) \, d\mu(y)  \\
&\qquad\qquad = f(s_x x) \; \overline{g(x)} \; J(x,x)   \\
&\qquad\qquad = f(x) \; \overline{g(x)} \; J(x,x).   \end{align*}
On~the other hand, by the invariance of~$J$ we have $J(x,x)=J(\phi_xx,\phi_xx)
=J(0,0)$, and taking in the defining property for~$J$
$$ \into f(x) \; J(x,0) \, d\mu(x) = \into f(m_{z,0}) \, d\mu(z)  $$
for $f$ an approximate identity (i.e.~letting $f$ tend to the delta function
at the origin), we~get $J(0,0)=1$. Thus the second part of the theorem follows.
\end{proof}

\begin{remark*}
In~addition to the Iwasawa decomposition $G=NAK$, one~also has the
Bruhat decomposition $G=\overline{KA^+K}$, where $A^+$ is a certain
``positive'' subset of~$A$ and the bar stands for closure. It~can be deduced
from the latter that the ambient space $\RR^d\supset\Omega=G/K$ admits a
``polar decomposition'' as $\RR^d\cong\overline{K/M\times\a^+}$ ---
more precisely, any $x\in\RR^d$ can be written in the form $x=ka$ with
$a$ lying in a fixed subspace isomorphic to~$\a\cong\a^*\cong\RR^r$;
and if we set $\a^+=\{t_1 e_1+\dots+t_r e_r: \; t_1>t_2>\dots>t_r>0\}$,
where $e_1,\dots,e_r$ is an appropriate basis for~$\a$, then the correspondence
$\RR^d\ni x\longleftrightarrow(kM,a)\in K/M\times\overline{\a^+}$ is one-to-one
except for the set of measure zero where $t_j=t_{j+1}$ or $t_j=0$ for some~$j$
(then the $t_1,\dots,t_r$ are still determined uniquely, but $kM$ is~not).
(The~$r$-tuple $\d(x):=(t_1,\dots,t_r)$ is called the ``complex distance''
of $x$ from the origin.)
In~this way, the cotangent space $T^*_x\Omega\cong\RR^d$
at any point $x\in\Omega$ can essentially be identified with $K/M\times\a^*$,
and we~can thus think of the Fourier-Helgason transform
$\tilde f:\Omega^*\to\CC$ as living on the cotangent space~$T^*_x\Omega$.
Similarly, the Wigner transform $\WW_F:\Omega\times\Omega^*\to\CC$
can be envisaged as living in fact on the cotangent bundle $T^*\Omega$.
In~a~way, this is reminiscent of viewing the ordinary Fourier transform
$\tilde f(\xi)$ on $\RR^2\cong\CC$ in the polar coordinates as $\tilde
f(\xi)\equiv\tilde f(r,\theta)$ where $\xi=re^{i\theta}$; the~subtle
difference is that instead of the simple symmetry relation $\tilde f
(r,\theta)=\tilde f(-r,\theta+\pi)$, for~the Fourier-Helgason transform
one has the more complicated symmetry relations, mentioned in Section~2,
relating $\tilde f(\lambda,b)$ and $\tilde f(s\lambda,b)$ for $s$ in the
Weyl group.  \qed   \end{remark*}

\section{Pseudodifferential operators}
In~analogy with the Euclidean case, the Wigner function can be used to
define the Weyl calculus of pseudodifferential operators by assigning
to a ``symbol'' function $a$ on $\Omega\times\Omega^*$ the operator
$\Psi_a$ on $L^2(\Omega,d\mu)$ defined~by
$$ \spr{\Psi_a u,v} = \into\inth \WW_{u\otimes\overline v}(x;\lambda,b) \;
 a(x,\lambda,b) \; |\elb(x)|^2 \, d\rho(\lambda,b) \, d\mu(x),   $$
where $(u\otimes\overline v)(x,y):=u(x)\overline{v(y)}$. In~other words,
\begin{align*}
\Psi_a u(y) &= \inth \into a(m_{z,y};\lambda,b) \; \elb(y) \; \emlb(z)
 \; u(z) \, d\mu(z) \, d\rho(\lambda,b)   \\
&= \into \wt a(y,z) \; u(z) \, d\mu(z),    \end{align*}
where $\wt a$ is the integral kernel
\[  \wt a(y,z) := \inth a(m_{z,y};\lambda,b) \; \elb(y) \; \emlb(z)
 \, d\rho(\lambda,b).    \label{tIK}   \]
Note that for $a(x;\lambda,b)=a(x)$ depending only on the space variable,
$\Psi_a$~reduces just to a multiplication operator: indeed, by~Plancherel's
formula~(\ref{tPL}),
$$ \Psi_a u(y) = \into a(m_{z,y}) \; \delta_{y,z} \; u(z) \, d\mu(z)
 = a(m_{y,y}) u(y) = a(y) u(y).   $$
Similarly, for $a(x;\lambda,b)=a(\lambda)$ depending only on $\lambda$
the operator $\Psi_a$ reduces to the corresponding Fourier multiplier:
$$ \Psi_a u(y) = \inth a(\lambda) \; \elb(y) \; \tilde u(\lambda,b)
 \, d\rho(\lambda,b) = \Big(a(\lambda)\tilde u(\lambda,b)\Big){}^\wedge,  $$
where ${}^\wedge$ stands for the inverse Fourier-Helgason transform.
This shows, in~particular, that all invariant differential operators on
$\Omega$ arise as $\Psi_a$ for $a=a(\lambda)$ an appropriate $W$-invariant
polynomial on~$\a^*$.

The~invariance properties of the Wigner transform are reflected in the
corresponding invariance properties for the Weyl pseudodifferential operators
$\Psi_a$ and their integral kernels~$\wt a$.

\begin{theorem} For any $g\in G$, we~have
$$ \wt a(gy,gz) = \wt{a^g}(y,z),   $$
i.e.~$\wt a\circ g=\wt{a^g}$, where $a^g(x;\lambda,b):=a(gx;\lambda,gb)$.
\end{theorem}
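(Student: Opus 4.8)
The plan is to verify the identity by a direct computation from the definition~(\ref{tIK}) of the kernel $\wt a$, following the same pattern as the derivation of~(\ref{tRH}) in Section~2. Starting from
$$ \wt a(gy,gz)=\inth a(m_{gz,gy};\lambda,b)\,\elb(gy)\,\emlb(gz)\,d\rho(\lambda,b), $$
I would first feed in three facts: the equivariance of the geodesic midpoint, $m_{gz,gy}=g\,m_{z,y}$ (the same fact used in the proof that $J(gx,gy)=J(x,y)$); the transformation rule~(\ref{tET}), which gives $\elb(gy)=\elb(g0)\,e_{\lambda,g^{-1}b}(y)$ and likewise $\emlb(gz)=\emlb(g0)\,e_{-\lambda,g^{-1}b}(z)$; and the identity $\emlb=\overline{\elb}$, so that the scalar prefactor is $\elb(g0)\,\emlb(g0)=|\elb(g0)|^2$. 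This brings the integral to the form
$$ \wt a(gy,gz)=\inth a(g\,m_{z,y};\lambda,b)\,|\elb(g0)|^2\,e_{\lambda,g^{-1}b}(y)\,e_{-\lambda,g^{-1}b}(z)\,d\rho(\lambda,b). $$

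The second step is the change of variable $b\mapsto gb$ on $B=K/M$, which is legitimate since $G$ acts on $B\cong G/MAN$ (this is exactly the substitution performed in the text in the course of justifying~(\ref{tRH})). Under it the plane-wave subscripts $g^{-1}b$ become $b$, the symbol's $b$-argument becomes $gb$, the prefactor becomes $|e_{\lambda,gb}(g0)|^2$, and $d\rho(\lambda,b)$ becomes $d\rho(\lambda,gb)$; the second line of~(\ref{tRH}) says precisely $|e_{\lambda,gb}(g0)|^2\,d\rho(\lambda,gb)=d\rho(\lambda,b)$, so those two factors cancel and one is left with
$$ \wt a(gy,gz)=\inth a(g\,m_{z,y};\lambda,gb)\,\elb(y)\,\emlb(z)\,d\rho(\lambda,b). $$
Since $a(g\,m_{z,y};\lambda,gb)=a^g(m_{z,y};\lambda,b)$ by the very definition of $a^g$, the right-hand side is $\wt{a^g}(y,z)$, which is the assertion.

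I expect the only delicate point to be the bookkeeping in the second step: one must choose the correct one of the two equivalent forms of~(\ref{tRH}) (equivalently~(\ref{tEX})) so that the moduli-squared factors cancel rather than compound, and one must keep straight that after the substitution the plane-wave factors shed their $g$ while the symbol's $b$-argument acquires one. Everything else — the legitimacy of the substitution and the transformation law of $d\rho=|c(\lambda)|^{-2}\,db\,d\lambda$ under it — is already in place from Section~2, so no new ingredient is required.

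Alternatively, the statement can be read off from the invariance of the Wigner transform, $\WW_{F\circ g}(x;\lambda,b)=\WW_F(gx;\lambda,gb)$: combining it with the definition of $\Psi_a$ through $\WW_{u\otimes\overline v}$ and the $G$-invariance of $d\mu$ yields $\Psi_{a^g}u=\bigl(\Psi_a(u\circ g^{-1})\bigr)\circ g$ for every $u\in L^2(\Omega,d\mu)$, and equating integral kernels (once more using invariance of $d\mu$) gives $\wt{a^g}(y,z)=\wt a(gy,gz)$. This second route is more conceptual but merely repackages the same manipulation, so I would present the direct computation as the proof.
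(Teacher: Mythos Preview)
Your proposal is correct and follows essentially the same route as the paper: expand $\wt a(gy,gz)$ via the definition, use $m_{gz,gy}=g\,m_{z,y}$ and the transformation rule~(\ref{tET}) to extract the factor $|\elb(g0)|^2$, substitute $b\mapsto gb$, and invoke the second line of~(\ref{tRH}) to cancel the modulus-squared factor against $d\rho(\lambda,gb)$. The paper's proof is precisely this computation written out as a single chain of equalities, with the reference to~(\ref{tRH}) at the penultimate step.
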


\begin{proof} From~(\ref{tET}),
\begin{align*}
\wt a(gy,gz) &= \inth a(gm_{z,y};\lambda,b) \; \elb(gy) \; \emlb(gz)
 \, d\rho(\lambda,b)  \\
&= \inth |\elb(g0)|^2 a(gm_{z,y};\lambda,b) \; e_{\lambda,g^{-1}b}(y) \;
 e_{-\lambda,g^{-1}b}(z) \, d\rho(\lambda,b)  \\
&= \inth a(gm_{z,y};\lambda,gb) \; \elb(y) \; \emlb(z) \;|e_{\lambda,gb}(g0)|^2
 \, d\rho(\lambda,gb)  \\
&= \inth a(gm_{z,y};\lambda,gb) \; \elb(y) \; \emlb(z) \, d\rho(\lambda,b)  \\
&= \wt{a^g}(y,z),    \end{align*}
where the penultimate equality used~(\ref{tRH}).    \end{proof}

For $g\in G$, let $U_g$ denote the unitary operator on $L^2(\Omega,d\mu)$
of composition with~$g^{-1}$:
$$ U_g f(z) := f(g^{-1}z).  $$

\begin{theorem} $U_g^*\Psi_a U_g=\Psi_{a^g}$.   \end{theorem}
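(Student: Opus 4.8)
The plan is to reduce the operator identity $U_g^*\Psi_a U_g=\Psi_{a^g}$ to the kernel identity $\wt a\circ g=\wt{a^g}$ proved in the preceding theorem, using only the definition of $U_g$ and the $G$-invariance of the measure $d\mu$. First I would write out $\spr{U_g^*\Psi_a U_g u,v}=\spr{\Psi_a U_g u,U_g v}$ and express the right-hand side through the integral kernel: since $\Psi_a w(y)=\into\wt a(y,z)\,w(z)\,d\mu(z)$, we get
\[
\spr{U_g^*\Psi_a U_g u,v}=\into\into \wt a(y,z)\,u(g^{-1}z)\,\overline{v(g^{-1}y)}\,d\mu(z)\,d\mu(y).
\]
Then I would substitute $z\mapsto gz$ and $y\mapsto gy$; by invariance of $d\mu$ the measures are unchanged, and the integrand becomes $\wt a(gy,gz)\,u(z)\,\overline{v(y)}$.

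Next I would invoke the previous theorem, which gives $\wt a(gy,gz)=\wt{a^g}(y,z)$, so the double integral is exactly $\into\into\wt{a^g}(y,z)\,u(z)\,\overline{v(y)}\,d\mu(z)\,d\mu(y)=\spr{\Psi_{a^g}u,v}$. Since $u,v$ range over a dense subspace (say $C_0^\infty(\Omega)$), this proves the claimed operator identity; one should note in passing that $U_g$ is indeed unitary, so $U_g^*=U_{g^{-1}}$ and the adjoint makes sense, but this is immediate from invariance of $d\mu$.

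I~do not expect any real obstacle here: the theorem is essentially a restatement of the kernel identity in operator language, and the only things needed are Fubini (justified because everything is smooth and compactly supported, or by the $L^2$-boundedness implicit in the setup) and the change of variables. The one point meriting a word of care is making sure the substitution is applied to \emph{both} integration variables simultaneously and that $m_{gz,gy}=gm_{z,y}$ is not secretly needed again — it~is not, because that geometric fact was already absorbed into the statement $\wt a\circ g=\wt{a^g}$. So the proof will be short: expand the sesquilinear form, change variables, quote the previous theorem, and recognize $\Psi_{a^g}$.
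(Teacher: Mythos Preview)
Your proposal is correct and follows essentially the same route as the paper: expand $\spr{\Psi_a U_g u,U_g v}$ via the integral kernel, change variables $y\mapsto gy$, $z\mapsto gz$ using the $G$-invariance of $d\mu$, invoke the preceding theorem $\wt a\circ g=\wt{a^g}$, and read off $\spr{\Psi_{a^g}u,v}$. The paper's proof is exactly this four-line computation, with no additional ingredients.
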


\begin{proof} Using the invariance of~$d\mu$ and the preceding theorem, we~get
\begin{align*}
\spr{\Psi_a U_g u,U_g v}
&= \into \into
 \wt a(y,z) \; u(g^{-1}y) \;\overline{v(g^{-1}z)} \,d\mu(y)\,d\mu(z) \\
&= \into \into \wt a(gy,gz) \; u(y) \;\overline{v(z)} \,d\mu(y) \,d\mu(z) \\
&= \into \into \wt{a^g}(y,z) \; u(y) \;\overline{v(z)} \,d\mu(y) \,d\mu(z) \\
&= \spr{\Psi_{a^g} u,v},   \end{align*}
completing the proof.   \end{proof}

\section{Invertibility}
We~proceed by showing that, to~a~certain extent, the assignments $a\mapsto
\wt a$ and $F\mapsto\WW_F$ are inverses of each other. While in the Euclidean
case this is true without any restrictions, for~symmetric domains this turns
out to hold, in~general, only for functions $F$ which are of a special form.
On~the level of the Wigner transform, this corresponds to symbols $a$ on
$\Omega\times\Omega^*$ which are independent of the variable~$b$:
$$ a(x;\lambda,b) = a(x;\lambda).   $$

\begin{theorem} \label{Th7}
Let $a$ be a function on $\Omega\times\Omega^*$ which is independent
of the variable~$b$. Then
$$ \WW_{\wt a} = a.   $$
\end{theorem}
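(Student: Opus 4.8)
The plan is to unwind both constructions and reduce the identity to the Fourier inversion formula (Plancherel) on $\Omega$. Starting from the integral kernel formula (\ref{tIK}), when $a$ does not depend on $b$ we have
\[ \wt a(y,z) = \inth a(m_{z,y};\lambda) \; \elb(y) \; \emlb(z) \, d\rho(\lambda,b). \label{pIK} \]
Then I substitute $F=\wt a$ into the definition of the Wigner transform, using the first of the two expressions for $\WW_F$; the crucial point is that $\wt a(s_xy,y)$ involves the midpoint $m_{y,s_xy}$, and the whole philosophy of the definitions (the use of the Jacobian $J$ and the geodesic symmetry $s_x$) is rigged precisely so that $m_{y,s_xy}=x$. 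Granting this, the inner kernel argument collapses to $a(x;\lambda')$ and one is left with a double integral over $\Omega^*\times\Omega$ of a product of four plane waves times $J(x,y)$, times the $\lambda$-independent factor $|\elb(x)|^{-2}$.

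The next step is to carry out the integration over $y\in\Omega$ first. After inserting (\ref{pIK}) into $\WW_{\wt a}$ one obtains (schematically)
\[ \WW_{\wt a}(x;\lambda,b) = |\elb(x)|^{-2} \inth a(x;\lambda') \Big[ \into \elb(y)\,\emlb(s_xy)\, \elpbp(s_xy)\, \emlpbp(y)\, J(x,y)\, d\mu(y) \Big] d\rho(\lambda',b'). \]
I expect the bracketed $y$-integral to be evaluable by first changing variables $y\mapsto s_xy$ (using invariance of $J$ as in the paragraph following the definition of $\WW_F$) to symmetrize it, and then recognizing it — after peeling off the appropriate plane-wave cocycle factors via (\ref{tET}) and (\ref{tEX}) — as a Helgason-Fourier transform of a plane wave, i.e.\ essentially a delta function $\delta_{b,b'}$ together with a delta in the $\lambda$ variables (coming from Plancherel, cf.\ (\ref{tPL}) read in the $\lambda$ rather than the $y$ direction). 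Conjugating $s_x$ by $\phi_x$ or writing $s_x y$ in terms of $\phi_x$ should make the plane-wave identities directly applicable. The $d\rho(\lambda',b')$ integration then eliminates the auxiliary variables and returns $a(x;\lambda,b)=a(x;\lambda)$, after the surviving $|\elb(x)|^{\pm2}$ factors cancel; one must check that the Weyl-group symmetry built into the range of the Fourier-Helgason transform does not spoil the evaluation, but since $a$ is $\lambda$-honest (not required $W$-invariant) and we are only asserting an identity of functions, this should go through.

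**Main obstacle.** The delicate point — and the reason this is stated only for $b$-independent symbols — is the $y$-integral above: with a genuine $b'$-dependence the factor $a(m_{z,y};\lambda',b')$ cannot be pulled out before integrating, and the plane waves $\emlb(s_xy)\elpbp(s_xy)$ with $b\neq b'$ do not combine into a single plane wave whose $y$-integral is a clean delta. So the heart of the proof is to verify that, in the $b$-independent case, the four plane waves and the Jacobian $J(x,y)$ organize (via (\ref{tET}), (\ref{tEX}), and the midpoint identity $m_{y,s_xy}=x$) into exactly the combination whose integral against $d\mu(y)$, followed by integration against $d\rho(\lambda',b')$, reproduces $a(x;\lambda)$ — in effect an application of Fourier inversion on $\Omega$ disguised by the $\phi_x$-cocycle bookkeeping. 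Getting the cocycle factors and the change of variables $y\mapsto s_xy$ to match up cleanly is the only real computation; everything else is formal manipulation of the definitions.
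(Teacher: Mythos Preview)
Your setup is correct through the point where you pull $a(x;\lambda')$ outside the $y$-integral (using $m_{y,s_xy}=x$); this matches the paper exactly. The gap is in the next step. You expect the bracketed $y$-integral
$$ \into \elb(y)\,\emlb(s_xy)\, \elpbp(s_xy)\, \emlpbp(y)\, J(x,y)\, d\mu(y) $$
to be, up to cocycle factors, a delta at $(\lambda',b')=(\lambda,b)$. This cannot be right as stated: if it were, the identity $\WW_{\wt a}=a$ would follow for \emph{all} symbols, not just $b$-independent ones, and the paper explicitly does not believe that (see the remark immediately after the proof and \S7.4). On a noncompact symmetric space the plane waves $\elb$ are not $L^2(\Omega)$-orthogonal to a clean delta in $(\lambda,b)$; the Plancherel relation (\ref{tPL}) runs the other way (integration over $\Omega^*$ yields $\delta_{yz}$) and does not dualize in the manner you need. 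Your ``Main obstacle'' paragraph acknowledges the difficulty but does not supply a mechanism beyond the hope that cocycle bookkeeping via (\ref{tET}), (\ref{tEX}) will make things cancel; it will not.

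The paper's argument uses the $b$-independence in a concrete way your plan does not capture. After reducing to $x=0$ by invariance, it integrates over $b'$ \emph{first}---legitimate precisely because $a(0;\lambda')$ can be pulled outside that integral---via the identity $\int_B \elpbp(u)\,\emlpbp(v)\,db'=\Phi_{\lambda'}(\phi_v u)$ (Lemma~\ref{Le8}), replacing two of the four plane waves by a spherical function. The $\lambda'$-integral then yields $\check a(0;\cdot)$, the inverse \emph{spherical} transform, which is automatically $K$-invariant in its second argument. The Jacobian is removed by its defining change of variables together with the geometric identity $\phi_{m_{0y}}s_0 m_{0y}=y$. Then---and this is the second essential use of $b$-independence---the $K$-invariance of $\check a(0;\cdot)$ permits averaging over $K$ and a second application of Lemma~\ref{Le8} to the remaining two plane waves. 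What survives is $\int_\Omega \Phi_\lambda(y)\,\check a(0;y)\,d\mu(y)=a(0;\lambda)$, i.e.\ spherical Fourier inversion. So the missing idea is not cocycle manipulation but the detour through spherical functions and the inverse spherical transform; without it the $y$-integral simply does not collapse.
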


We~begin with a lemma.

\begin{lemma} \label{Le8} For $x,y\in\Omega$ and $\lambda\in\a^*$,
$$ \int_B \elb(x) \; \emlb(y) \, db = \Phi_\lambda(\phi_y x)
 =\Phi_\lambda(\phi_x y).  $$
\end{lemma}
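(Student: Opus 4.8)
The plan is to compute the integral $\int_B \elb(x)\,\emlb(y)\,db$ directly from the defining formula $\elb(z)=e^{(i\lambda+\rho)(A(z,b))}$ and reduce it to the spherical function $\Phi_\lambda$. The crucial algebraic identity to establish first is a cocycle-type relation for $A(x,b)$: that is, a formula for $A(gx,gb)$ in terms of $A(x,b)$. Indeed, taking logarithms in the transformation rule \eqref{tET}, which says $\elb(gz)=\elb(g0)\,e_{\lambda,g^{-1}b}(z)$, we get $(i\lambda+\rho)(A(gz,b))=(i\lambda+\rho)(A(g0,b))+(i\lambda+\rho)(A(z,g^{-1}b))$; since this holds for all $\lambda\in\a^*$, it forces the additive identity $A(gz,b)=A(g0,b)+A(z,g^{-1}b)$ in $\a$. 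Equivalently, replacing $b$ by $gb$, $A(z,gb)=A(gz,b)-A(g0,b)$.

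Next I would choose $g=\phi_y$, the geodesic symmetry interchanging $y$ and the origin, so that $g0=y$ and $gy=0$. Note first that $e^{\rho(A(0,b))}=1$ for all $b$ (since $\Phi_0\equiv1$, or directly because the identity coset gives trivial Iwasawa data), hence $A(0,b)=0$. Now I change variables in the integral $b\mapsto \phi_y b$. By \eqref{tEX}, $d(\phi_y b)=|e_{\lambda,b}(\phi_y 0)|^2\,db=e^{2\rho(A(y,b))}\,db$, and conversely $db=|e_{\lambda,\phi_y b}(y)|^2\,d(\phi_y b)=e^{2\rho(A(\phi_y y,b))}\,d(\phi_y b)=e^{2\rho(A(0,b))}\,d(\phi_y b)=d(\phi_y b)$; so actually $db$ is invariant under $b\mapsto\phi_y b$ here, and more simply $B=K/M$ carries the $K$-invariant probability measure while the substitution by $\phi_y$ at the level of the boundary amounts to a measure-preserving change because of the $\rho$-factor cancellation. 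Using the cocycle identity with $g=\phi_y$: $A(x,\phi_y b)=A(\phi_y x,b)-A(\phi_y 0,b)=A(\phi_y x,b)-A(y,b)$ and $A(y,\phi_y b)=A(\phi_y y,b)-A(y,b)=A(0,b)-A(y,b)=-A(y,b)$. Therefore
\[
\elb(x)\,\emlb(y)\Big|_{b\to\phi_y b}
= e^{(i\lambda+\rho)(A(\phi_y x,b)-A(y,b))}\,e^{(-i\lambda+\rho)(-A(y,b))}
= e^{(i\lambda+\rho)(A(\phi_y x,b))}\,e^{-2\rho(A(y,b))},
\]
and the factor $e^{-2\rho(A(y,b))}$ is exactly the Jacobian relating $d(\phi_y b)$ back to $db$, so after substituting it cancels and we are left with $\int_B e^{(i\lambda+\rho)(A(\phi_y x,b))}\,db=\int_B e_{\lambda,b}(\phi_y x)\,db=\int_K e_{\lambda,kb_0}(\phi_y x)\,dk=\Phi_\lambda(\phi_y x)$, using the second formula for $\Phi_\lambda$ in Section~2. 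The equality $\Phi_\lambda(\phi_y x)=\Phi_\lambda(\phi_x y)$ then follows since $\phi_y x$ and $\phi_x y$ lie on the same $K$-orbit: both have the same ``complex distance'' from the origin because $\phi_x y=\phi_x\phi_y\cdot\phi_y x$ and $\phi_x\phi_y\in G$ carries $\phi_y x$ to $\phi_x y$ while fixing distances; more concretely, $s_{m_{x,y}}$ interchanges $x$ and $y$, conjugating $\phi_x$ to $\phi_y$, and $\Phi_\lambda$ is $K$-bi-invariant, i.e.\ depends only on the $K$-orbit of its argument.

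The main obstacle I anticipate is bookkeeping the measure-transformation factors correctly: one must be careful that the $\rho$-exponential arising from $A(\phi_y\cdot,\cdot)$ under the boundary substitution matches precisely the Radon–Nikodym factor in \eqref{tEX}, with no leftover power of $|c(\lambda)|$ or stray $\lambda$-dependence (which is guaranteed because $|\elb(x)|^2=e^{2\rho(A(x,b))}$ is $\lambda$-independent). A secondary point needing a clean argument is the symmetry $\Phi_\lambda(\phi_y x)=\Phi_\lambda(\phi_x y)$; the quickest route is to observe $\phi_y x$ and $\phi_x y$ are related by an isometry of $\Omega$ fixing $0$, i.e.\ lie in the same $K$-orbit, whence the $K$-invariant integral defining $\Phi_\lambda$ takes the same value.
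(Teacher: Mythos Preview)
Your approach is essentially the paper's: rewrite $\elb(x)\,\emlb(y)$ via the transformation rule \eqref{tET} under~$\phi_y$, then absorb the Jacobian from~\eqref{tEX}. Unpacking \eqref{tET} into a cocycle identity for $A(\cdot,\cdot)$ is equivalent to what the paper does directly with the $e_{\lambda,b}$, and your final cancellation yielding $\int_B e_{\lambda,b}(\phi_y x)\,db=\Phi_\lambda(\phi_y x)$ is correct. Two slips need fixing, however.

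First, the digression concluding that $db=d(\phi_y b)$ is wrong. You wrote $|e_{\lambda,\phi_y b}(y)|^2=e^{2\rho(A(\phi_y y,b))}$, but by definition this modulus equals $e^{2\rho(A(y,\phi_y b))}$, which by your own cocycle identity is $e^{-2\rho(A(y,b))}\neq 1$. The boundary measure $db$ is \emph{not} $\phi_y$-invariant; the whole point of \eqref{tEX} is that it transforms with exactly the weight $e^{2\rho(A(y,b))}$ that later cancels your extra factor. Fortunately you then ignore the false claim and carry out the correct computation, so the first equality survives; just delete the offending sentence.

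Second, your justification of $\Phi_\lambda(\phi_y x)=\Phi_\lambda(\phi_x y)$ contains a genuine error: $\phi_x\phi_y$ does \emph{not} carry $\phi_y x$ to $\phi_x y$, since $\phi_x\phi_y(\phi_y x)=\phi_x(x)=0$. What is needed is an explicit $k\in K$ with $k\phi_y x=\phi_x y$. The paper takes $k:=\phi_{\phi_x y}\phi_x\phi_y$; one checks $k(0)=\phi_{\phi_x y}(\phi_x y)=0$, so $k\in K$, and $k(\phi_y x)=\phi_{\phi_x y}(\phi_x x)=\phi_{\phi_x y}(0)=\phi_x y$. Your allusion to the complex distance can also be made to work (indeed $\d(\phi_x y)=\d(x,y)=\d(y,x)=\d(\phi_y x)$, whence the two points lie in the same $K$-orbit), but the $s_{m_{x,y}}$ remark and the $\phi_x\phi_y$ claim, as written, do not constitute a proof.
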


\begin{proof}
By~(\ref{tET}),
$$ \elb(x) = \elb(\phi_y \phi_y x) = \elb(\phi_y0) e_{\lambda,\phi_yb}
 (\phi_y x) = \elb(y) e_{\lambda,\phi_yb} (\phi_y x).  $$
Hence
$$ \elb(x) \emlb(y) = |\elb(y)|^2 \; e_{\lambda,\phi_yb} (\phi_y x).  $$
But~by~(\ref{tEX}), $|\elb(y)|^2\,db=d(\phi_yb)$; thus
\begin{align*}
\int_B \elb(x) \; \emlb(y) \, db
&= \int_B e_{\lambda,\phi_yb} (\phi_y x) \, d(\phi_y b)  \\
&= \int_B \elb(\phi_y x) \, db = \Phi_\lambda(\phi_y x),
\end{align*}
proving the first claim. For~the second, note that $\phi_{\phi_x y}\phi_x
\phi_y=:k$ maps $0$ to~$0$, hence belongs to~$K$; and from $\phi_{\phi_x y}
=k\phi_y\phi_x$ we then get $\phi_x y=\phi_{\phi_x y}0=k\phi_y\phi_x0
=k\phi_y x$. Since $\Phi_\lambda$ is $K$-invariant, it~follows that
$\Phi_\lambda(\phi_xy)=\Phi_\lambda(\phi_yx)$.     \end{proof}

\begin{proof}[Proof of Theorem~\ref{Th7}] Note that from the transformation
properties of $\WW$ and~$\wt a$ we~have
\begin{align*}
\WW_{\wt a}(x;\lambda,b) &= \WW_{\wt a}(\phi_x0;\lambda,\phi_x\phi_xb)
= \WW_{\wt a\circ\phi_x}(0;\lambda,\phi_x b)
= \WW_{\wt{a^{\phi_x}}} (0;\lambda,\phi_x b),   \\
a(x;\lambda,b) &= a(\phi_x0;\lambda,\phi_x\phi_xb)
= a^{\phi_x}(0;\lambda,\phi_x b).   \end{align*}
Furthermore, it~is immediate from the definition that if $a$ is independent
of~$b$, then so is $a^g$ for any~$g\in G$.
Thus it is enough to prove the assertion for $x=0$, i.e.~to prove that
$$ \WW_{\wt a}(0;\lambda,b) = a(0;\lambda) \qquad\forall b\in B.   $$
From the definitions we get
\begin{align*}
\WW_{\wt a}(x;\lambda,b)
&= |\elb(x)|^{-2} \into \elb(y) \; \emlb(s_x y) \; \wt a(s_x y,y)
 \; J(x,y) \, d\mu(y)  \\
&= |\elb(x)|^{-2} \into \inth \elb(y) \; \emlb(s_x y)
 \; a(m_{y,s_xy};\lambda',b') \elpbp(s_xy) \\
& \hskip14em
 \emlpbp(y) \; J(x,y) \, d\rho(\lambda',b') \, d\mu(y)  \\
&= |\elb(x)|^{-2} \inth a(x;\lambda',b') \into \elb(y) \; \emlb(s_x y)
 \; \elpbp(s_xy) \\
& \hskip14em
 \emlpbp(y) \; J(x,y) \, d\mu(y) \, d\rho(\lambda',b') ,
\end{align*}
since $m_{y,s_xy}=x$. Thus, as $\elb(0)=1$ for any $\lambda$ and~$b$,
\begin{align*}
 \WW_{\wt a}(0;\lambda,b) &= \inth a(0;\lambda',b') \into \elb(y) \;
 \emlb(s_0y) \; \elpbp(s_0y) \\
& \hskip14em
 \emlpbp(y) \; J(0,y) \, d\mu(y) \, d\rho(\lambda',b') .
\end{align*}
(Here, of~course, $s_0y=-y$, but we keep $s_0$ in order to avoid some extra
parenthesis below.)
As~$a(0;\lambda',b')$ is independent of~$b'$ by hypothesis, we~can carry out
the $b'$ integration, the result being by the last lemma
\begin{align*}
\WW_{\wt a}(0;\lambda,b)
&= \intah a(0;\lambda') \into \elb(y) \; \emlb(s_0y) \;
 \Phi_\lambda(\phi_ys_0y) \; J(0,y) \, d\mu(y) \, d\rho(\lambda')  \\
&= \into \elb(y) \; \emlb(s_0y) \; \check a(0;\phi_ys_0y) \; J(0,y) \,d\mu(y),
\end{align*}
where $\check a$ stands for the inverse Helgason-Fourier (or,~in~this case,
spherical) transform of $a(x;\lambda)$ with respect to~$\lambda$. Applying
the definition of the Jacobian, this becomes
$$ \WW_{\wt a}(0;\lambda,b) = \into \elb(m_{0y}) \; \emlb(s_0m_{0y}) \;
 \check a(0;\phi_{m_{0y}}s_0m_{0y}) \, d\mu(y).   $$
However, $\phi_{m_{y,0}}s_0m_{y,0}=y$, so~the last expression equals
$$ \WW_{\wt a}(0;\lambda,b) = \into \elb(m_{0y}) \; \emlb(s_0m_{0y}) \;
 \check a(0;y) \, d\mu(y).   $$
Since $\check a(0,\,\cdot\,)$ is a $K$-invariant function, we~can replace
$y$ by~$ky$, and then also integrate over~$k$. Since $m_{0,ky}=m_{k0,ky}
=km_{0,y}$ and $\elb(kz)=e_{\lambda,k^{-1}b}(z)$, this gives, using again
the last lemma,
\begin{align*}
\WW_{\wt a}(0;\lambda,b)
&= \into \int_K \elb(m_{0,ky}) \; \emlb(s_0m_{0,ky}) \, dk \; \check a(0;y)
 \, d\mu(y)  \\
&= \into \int_K e_{\lambda,k^{-1}b}(m_{0y}) \; e_{-\lambda,k^{-1}b}(s_0m_{0y})
 \, dk \; \check a(0;y) \, d\mu(y)  \\
&= \into \int_B e_{\lambda,b}(m_{0y}) \; e_{-\lambda,b}(s_0m_{0y}) \, db
 \; \check a(0;y) \, d\mu(y)  \\
&= \into \Phi_\lambda(\phi_{m_{0y}}s_0m_{0y}) \; \check a(0;y) \, d\mu(y)  \\
&= \into \Phi_\lambda(y) \; \check a(0;y) \, d\mu(y)  \\
&= a(0;\lambda),
\end{align*}
by~(\ref{tSPH}). (Note that, $K$~being a compact group, $d(k^{-1})=dk$.)
This completes the proof.   \end{proof}

\begin{remark*}
From the proof it is evident that the theorem in general cannot be expected
to hold if the $K$-invariance hypothesis is dropped.
\end{remark*}

To~state the analogue of the last theorem in the other direction, we~first
need to identify the functions $\wt a(x,y)$ corresponding to symbols
$a(x;\lambda,b)$ which are independent of~$b$.

Let $\AA$ denote the set of all functions $F$ on $\Omega\times\Omega$
of the form
\[ F(x,y) = A(m_{xy},\phi_x y),    \label{tAF}  \]
where $A:\Omega\times\Omega\to\CC$ is $K$-invariant in the second argument,
i.e.~$A(u,v)=A(u,kv)$ $\forall k\in K$.

\begin{remark*} The~map
$$ (x,y) \longmapsto (m_{xy},\phi_{m_{xy}}x)   $$
of $\Omega\times\Omega$ into itself is a diffeomorphism onto; its inverse
is given~by
$$ (m,u) \longmapsto (\phi_m u,\phi_m s_0 u).   $$
Thus every function $F$ on $\Omega\times\Omega$ can be written uniquely
in the form $F(x,y)=G(m_{xy},\phi_{m_{xy}}x)$ for some function $G$ on
$\Omega\times\Omega$. Functions in $\AA$ correspond to the $G$ which
are $K$-invariant in the second argument.

(Indeed, recalling the notion of the complex distance $\d(x)$ from the origin
mentioned in the end of Section~3, one can define also the complex distance
$\d(x,y)$ of two points $x,y\in\Omega$ by $\d(x,y):=\d(\phi_xy)=\d(\phi_yx)$.
It~is then known that $\d(gx,gy)=\d(x,y)$ for any $g\in G$ (and, conversely,
if~$\d(x,y)=\d(x_1,y_1)$, then there is $g\in G$ with $gx=x_1$, $gy=y_1$).
The~condition that a function $f(x)$, $x\in\Omega$, is~$K$-invariant means
precisely that it depends only on~$\d(x)$. Furthermore, $\d(x,s_0x)=\frac
{2\d(x)}{1+\d(x)^2}$ (where $\frac{2\d}{1+\d^2}:=(\frac{2d_1}{1+d_1^2},\dots,
\frac{2d_r}{1+d_r^2})$ if $\d=(d_1,\dots,d_r)$), and similarly $\d(x,y)=\frac
{2\d(m_{xy},x)}{1+\d(m_{xy},x)^2}$; that~is, $\d(x,y)$ and $\d(m_{xy},x)$ are
uniquely determined by each other, and similarly for $\d(x)$ and $\d(x,s_0x)$.
Hence, if $F(x,y)=A(m_{xy},\phi_xy)$ where $A$ is $K$-invariant in the second
argument, and $G(m,u):=F(\phi_m u,\phi_m s_0u)$, then $G(m,u)$ depends only
on $m_{\phi_m u,\phi_m s_0u}=m$ and $\d(\phi_m u,\phi_m s_0u)=\d(u,s_0u)=
\frac{2\d(u)}{1+\d(u)^2}$, hence only on $m$ and~$\d(u)$, so~it is
$K$-invariant in~$u$. Conversely, if $F(x,y)=G(m_{xy},\phi_{m_{xy}}x)$ where
$G$ is $K$-invariant in the second argument, then $F(x,y)$ depends only on
$m_{xy}$ and $\d(m_{xy},x)$, hence only on $m_{xy}$ and $\d(x,y)$, so~it has
the form $F(x,y)=A(m_{xy},\phi_xy)$ where $A(m,u)$ is $K$-invariant in~$u$,
i.e.~$F\in\AA$.)   \end{remark*}

\begin{proposition} \label{PrAA}
If~$a(x;\lambda,b)=a(x;\lambda)$ does not depend on~$b$, then $\wt a\in\AA$.
Conversely, every function $F$ in $\AA$ arises as $\wt a$ for a unique $a$
as~above.
\end{proposition}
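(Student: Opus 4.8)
The plan is to compute $\wt a(x,y)$ directly from its definition~(\ref{tIK}) and show that it factors through the variables $m_{xy}$ and $\phi_xy$, with $K$-invariance in the second slot. First I would use the transformation rule for $\wt a$ (the theorem that $\wt a\circ g=\wt{a^g}$) together with the fact, already noted in the proof of Theorem~\ref{Th7}, that $a$ being independent of~$b$ implies $a^g$ is too. Choosing $g=\phi_{m_{xy}}$, this reduces the computation of $\wt a(x,y)$ to understanding $\wt{a^{\phi_{m_{xy}}}}$ evaluated at the pair $(\phi_{m_{xy}}x,\phi_{m_{xy}}y)$; note that $m_{\phi_{m_{xy}}x,\phi_{m_{xy}}y}=\phi_{m_{xy}}m_{xy}=0$, so after this normalization one is looking at $\wt{(\cdot)}$ at a pair of points whose midpoint is the origin, i.e.\ at a pair of the form $(w,s_0w)=(w,-w)$ for $w=\phi_{m_{xy}}x$. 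So it suffices to analyze $\wt c(w,s_0w)$ for an arbitrary symbol $c=c(x;\lambda)$ independent of~$b$, and to check that this depends only on $\d(w)$ (equivalently, is $K$-invariant in~$w$), since $w$ is essentially $\phi_xy$ up to the identifications in the second Remark above.

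The key computation is then: from~(\ref{tIK}),
\[
\wt c(w,s_0w) = \inth c(m_{s_0w,w};\lambda,b)\,\elb(w)\,\emlb(s_0w)\,d\rho(\lambda,b)
= \inth c(0;\lambda)\,\elb(w)\,\emlb(s_0w)\,d\rho(\lambda,b),
\]
using $m_{w,s_0w}=0$. Since the integrand now has $b$-dependence only through $\elb(w)\emlb(s_0w)$, I would carry out the $b$-integration first: by Lemma~\ref{Le8},
\[
\int_B \elb(w)\,\emlb(s_0w)\,db = \Phi_\lambda(\phi_{s_0w}w) = \Phi_\lambda(\phi_w s_0w),
\]
and one computes (as in the proof of Theorem~\ref{Th7}, where the relation $\phi_{m_{y,0}}s_0m_{y,0}=y$ was used, or directly via $\d(w,s_0w)=\frac{2\d(w)}{1+\d(w)^2}$) that $\phi_w s_0w$ has complex distance $\frac{2\d(w)}{1+\d(w)^2}$ from the origin, hence $\Phi_\lambda(\phi_w s_0w)$ is a function of $\d(w)$ alone. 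Therefore $\wt c(w,s_0w) = \intah c(0;\lambda)\,\Phi_\lambda(\phi_w s_0w)\,d\rho(\lambda)$ depends on $w$ only through $\d(w)$, i.e.\ it is $K$-invariant in~$w$. Translating back via the diffeomorphism $(x,y)\leftrightarrow(m_{xy},\phi_{m_{xy}}x)$ and the equivalence established in the second Remark, this says exactly that $\wt a\in\AA$: writing $\wt a(x,y)=A(m_{xy},\phi_xy)$, the function $A(m,u)$ so obtained is $K$-invariant in~$u$.

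For the converse, given $F\in\AA$, write $F(x,y)=A(m_{xy},\phi_xy)$ with $A(m,\cdot)$ $K$-invariant. By the normalization argument above (apply $g=\phi_m$), it is enough to recover, for each fixed $m$, a symbol from the single-variable $K$-invariant function $u\mapsto A(m,u)$; equivalently, from the associated $K$-invariant function of $w=\phi_m^{-1}(\cdot)$ whose value at $s_0w\mapsto w$ encodes $F$ along the fiber $\{(w,s_0w)\}$. The displayed formula $\wt c(w,s_0w)=\intah c(0;\lambda)\Phi_\lambda(\phi_ws_0w)\,d\rho(\lambda)$ exhibits $\wt c(\cdot,s_0\cdot)$ as the spherical (inverse Helgason-Fourier) transform of $c(0;\cdot)$ evaluated at the point $\phi_ws_0w$, whose complex distance ranges over all of $\overline{\a^+}$ as $\d(w)$ does, since $d\mapsto\frac{2d}{1+d^2}$ is a bijection of $[0,1)$ onto $[0,1)$. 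Hence the map $c(0;\cdot)\mapsto\bigl(w\mapsto\wt c(w,s_0w)\bigr)$ is, up to a reparametrization of the spatial variable, exactly the spherical transform on $\Omega$, which is injective with known image; inverting it (using the spherical Plancherel/inversion formula~(\ref{tSPH})) recovers $c(0;\lambda)$ uniquely from $A(0,\cdot)$, and doing this for every $m$ (via $g=\phi_m$) recovers $a(m;\lambda)$ from $A(m,\cdot)$. Uniqueness of $a$ is immediate since $\WW_{\wt a}=a$ by Theorem~\ref{Th7}, so $a$ is determined by $\wt a$, hence by $F$.

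The main obstacle I expect is the bookkeeping around the two competing "normal form" coordinate systems — $(m_{xy},\phi_xy)$ versus $(m_{xy},\phi_{m_{xy}}x)$ — and making fully precise that $K$-invariance of $A$ in the second argument of the first system corresponds to $K$-invariance in the second argument of $\wt{(\cdot)}$'s natural parametrization. This is exactly the content of the second Remark above (the identities $\d(x,y)=\frac{2\d(m_{xy},x)}{1+\d(m_{xy},x)^2}$ and $\d(x,s_0x)=\frac{2\d(x)}{1+\d(x)^2}$, together with the bijectivity of $d\mapsto\frac{2d}{1+d^2}$), and once those identifications are invoked the analytic core is just the injectivity and surjectivity of the spherical transform, which is standard. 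A minor additional point requiring care is verifying that the image of the spherical transform (the Weyl-symmetric functions of $\lambda$) matches the natural domain of symbols $a(x;\lambda)$, but since invariant differential operators already forced $W$-invariance in the $\lambda$ variable earlier, no genuine restriction is being imposed.
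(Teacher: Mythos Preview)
Your argument is correct, but it takes a detour the paper avoids. The paper simply applies Lemma~\ref{Le8} directly to the defining integral~(\ref{tIK}) at an arbitrary pair $(x,y)$: since $a(m_{xy};\lambda,b)=a(m_{xy};\lambda)$ is already independent of~$b$, one can integrate over $b$ first and Lemma~\ref{Le8} gives $\int_B\elb(x)\emlb(y)\,db=\Phi_\lambda(\phi_xy)$ immediately, so $\wt a(x,y)=\intah a(m_{xy};\lambda)\Phi_\lambda(\phi_xy)\,d\rho(\lambda)=\check a(m_{xy},\phi_xy)$. This is manifestly of the form~(\ref{tAF}) with $A=\check a$, and $\check a(m,\cdot)$ is $K$-invariant because it is an inverse spherical transform. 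The converse is then just the spherical inversion formula applied fiberwise in~$m$.

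Your normalization step $g=\phi_{m_{xy}}$ is unnecessary because Lemma~\ref{Le8} already handles arbitrary $x,y$, not just pairs with midpoint~$0$. By reducing to $(w,s_0w)$ you land on $\Phi_\lambda(\phi_w s_0w)$ instead of $\Phi_\lambda(\phi_xy)$, and then have to invoke the complex-distance identities from the Remark to see that these agree --- exactly the ``bookkeeping'' you flagged as the main obstacle. The paper's direct route eliminates that obstacle entirely; what you identified as the analytic core (injectivity/surjectivity of the spherical transform) is indeed all that remains once one computes $\wt a$ in closed form. Your appeal to Theorem~\ref{Th7} for uniqueness is a nice touch, though the paper is content to cite spherical inversion alone.
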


\begin{proof} For $a=a(x;\lambda)$ independent of~$b$, we~have by
Lemma~\ref{Le8}
\begin{align*}
\wt a(x,y) &= \inth a(m_{xy};\lambda) \; \elb(x) \; \emlb(y)
 \, d\rho(\lambda,b)  \\
&= \inth a(m_{xy};\lambda) \; \Phi_\lambda(\phi_xy) \, d\rho(\lambda)  \\
&= \check a(m_{xy};\phi_x y),
\end{align*}
where $\check a$ has the same meaning as in the proof of Theorem~\ref{Th7}.
Thus $F=\wt a$ is of the form (\ref{tAF}) with $A=\check a$, proving the
first claim. The~inversion formula for the spherical transform gives the
second part.   \end{proof}

\begin{corollary} \label{CoroAA}
Let $F\in\AA$. Then
$$ \wt{\WW_F} = F.   $$
\end{corollary}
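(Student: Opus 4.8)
The plan is to combine the two results we already have: Theorem~\ref{Th7}, which says $\WW_{\wt a}=a$ whenever $a$ is independent of $b$, and Proposition~\ref{PrAA}, which sets up a bijective correspondence $a\leftrightarrow\wt a$ between $b$-independent symbols $a(x;\lambda)$ and functions $F\in\AA$. Given $F\in\AA$, the second half of Proposition~\ref{PrAA} produces a (unique) $b$-independent symbol $a=a(x;\lambda)$ with $\wt a=F$. Applying $\WW$ to both sides gives $\WW_F=\WW_{\wt a}$, and then Theorem~\ref{Th7} yields $\WW_{\wt a}=a$. Thus $\WW_F=a$, and feeding this back through $\wt{\;\cdot\;}$ gives $\wt{\WW_F}=\wt a=F$, which is the assertion.

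Concretely I would write: let $F\in\AA$. By the converse part of Proposition~\ref{PrAA} there is a symbol $a(x;\lambda)$, independent of $b$, with $\wt a=F$. Then by Theorem~\ref{Th7}, $\WW_F=\WW_{\wt a}=a$. Applying the map $\;\widetilde{\phantom{a}}\;$ to this identity of symbols gives $\wt{\WW_F}=\wt a=F$, as claimed.

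The one point that needs a word of care — and the only place where anything could go wrong — is making sure the symbol $a$ delivered by Proposition~\ref{PrAA} is genuinely independent of $b$, so that Theorem~\ref{Th7} is applicable to it; but this is exactly what the statement of Proposition~\ref{PrAA} asserts (``a unique $a$ as above'', i.e.\ of the form $a(x;\lambda)$), so there is no real obstacle. Everything else is a formal two-line chaining of previously established facts. In particular, no new computation with the plane waves $\elb$ or the Jacobian $J$ is required. I do not expect any hard step here — the content of the Corollary is entirely absorbed into Theorem~\ref{Th7} and Proposition~\ref{PrAA}, and the Corollary is essentially the observation that those two results are, between them, the statement that $F\mapsto\WW_F$ and $a\mapsto\wt a$ are mutually inverse on the class $\AA$ of $b$-independent-type data.

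\begin{proof}
Let $F\in\AA$. By the second part of Proposition~\ref{PrAA}, there is a
(unique) function $a(x;\lambda)$ on $\Omega\times\Omega^*$, independent of
the variable~$b$, such that $\wt a=F$. Since $a$ is independent of~$b$,
Theorem~\ref{Th7} applies and gives
$$ \WW_F = \WW_{\wt a} = a. $$
Applying the map $a\mapsto\wt a$ to both sides of this identity yields
$$ \wt{\WW_F} = \wt a = F, $$
as asserted.
\end{proof}
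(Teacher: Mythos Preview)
Your proof is correct and follows essentially the same approach as the paper: use Proposition~\ref{PrAA} to write $F=\wt a$ for a $b$-independent symbol~$a$, then invoke Theorem~\ref{Th7} to obtain $\wt{\WW_F}=\wt{\WW_{\wt a}}=\wt a=F$. The paper's proof compresses this into a single displayed chain of equalities, but the logic is identical.
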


\begin{proof} With the $a$ from the last proposition, we~have by
Theorem~\ref{Th7}
$$ \wt{\WW_F} = \wt{\WW_{\wt a}} = \wt a = F.  $$    \end{proof}

Observe that in the proof of Theorem~\ref{Th7}, when computing $\WW_{\wt a}
(0;\lambda,b)$ we in fact never used the full hypothesis that $a(x;\lambda,b)$
is independent of~$b$, but only that $a(0;\lambda,b)$ is independent of~$b$.
We~conclude this section by recording a small corollary to this observation.

\begin{proposition}
Assume that $a$ is $K$-invariant, in~the sense that $a=a^k$ $\forall k\in K$.
Then $a(0;\lambda,b)=a(0;\lambda)$ is independent of~$b$, and
$$ \WW_{\wt a}(0;\lambda,b) = a(0;\lambda) \qquad \forall b\in B.  $$
\end{proposition}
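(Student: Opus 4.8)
The plan is to observe that the final proposition is simply an extraction of what the proof of Theorem~\ref{Th7} already establishes, once one checks the preliminary $K$-invariance claim. So I would proceed in two stages.

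First I would verify the assertion that $a(0;\lambda,b)$ is independent of~$b$. Recall $a^k(x;\lambda,b)=a(kx;\lambda,kb)$; evaluating at $x=0$ and using $k0=0$ for $k\in K$ gives $a^k(0;\lambda,b)=a(0;\lambda,kb)$. The hypothesis $a=a^k$ for all $k\in K$ therefore yields $a(0;\lambda,b)=a(0;\lambda,kb)$ for all $k\in K$. Since $B=K/M$ and $K$ acts transitively on~$B$, the value $a(0;\lambda,b)$ does not depend on~$b$; write it $a(0;\lambda)$.

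Second I would invoke the observation recorded just before the statement: in the computation of $\WW_{\wt a}(0;\lambda,b)$ carried out in the proof of Theorem~\ref{Th7}, the only place the hypothesis ``$a$ independent of~$b$'' entered was at the step where the $b'$-integration in
$$ \WW_{\wt a}(0;\lambda,b) = \inth a(0;\lambda',b') \into \elb(y)\,\emlb(s_0y)\,\elpbp(s_0y)\,\emlpbp(y)\,J(0,y)\,d\mu(y)\,d\rho(\lambda',b') $$
was performed via Lemma~\ref{Le8}, and for that one needs only that $a(0;\lambda',b')$ is independent of~$b'$ --- which we have just proved. The remaining chain of equalities in that proof (carrying out the $b'$-integration to get $\Phi_{\lambda'}(\phi_y s_0 y)$, introducing $\check a(0;\,\cdot\,)$, applying the Jacobian identity, using $\phi_{m_{y,0}}s_0 m_{y,0}=y$, averaging over $K$ against the $K$-invariant function $\check a(0;\,\cdot\,)$, and finally applying the spherical inversion formula~(\ref{tSPH})) goes through verbatim and delivers $\WW_{\wt a}(0;\lambda,b)=a(0;\lambda)$ for all $b\in B$.

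There is essentially no obstacle here: the content is already contained in Theorem~\ref{Th7}, and the only genuinely new point is the transitivity argument in the first stage. One should, however, take a moment to double-check that $a$ need not be assumed independent of $b$ globally --- indeed $\check a$ in the above is the spherical transform of $a(0;\lambda)$ in the $\lambda$ variable only, so no global hypothesis is used, exactly as the preceding remark claims.
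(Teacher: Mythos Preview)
Your proposal is correct and follows exactly the paper's own argument: first use $a^k=a$ together with $k0=0$ and the transitivity of $K$ on $B=K/M$ to conclude that $a(0;\lambda,b)$ is independent of~$b$, and then invoke the observation preceding the proposition that the proof of Theorem~\ref{Th7} at $x=0$ only required this independence. Your additional spelling-out of the subsequent steps from Theorem~\ref{Th7} is accurate but the paper simply says ``the rest is immediate.''
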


\begin{proof} From $a^k=a$ we get
$$ a(0;\lambda,kb) = a(k0;\lambda,kb) = a(0;\lambda,b),  $$
proving that $a(0;\lambda,b)$ is independent of~$b$, since $K$ acts
transitively on $B=K/M$. The rest is immediate from the observation
preceding the proposition.    \end{proof}

\section{Unitarity}
The~classical Euclidean Wigner transform is a unitary operator on
$L^2(\RR^n\times\RR^n)$.
Here is an analogue of this fact in our setting of symmetric spaces
of non-compact type. This theorem seems not to have been hitherto
noticed even in the simplest non-Euclidean setting of the unit disc.
For~brevity, let~us denote by $\BB$ the set of all functions $a(x;\lambda,b)$
on $\Omega\times\Omega^*$ which are independent of the variable~$b$.

\begin{theorem} \label{Th9} The~map $F\mapsto\WW_F$ is a unitary operator
from $L^2(\Omega\times\Omega,d\mu\times d\mu)\cap\AA$ onto
$L^2(\Omega\times\Omega^*,d\mu\times d\rho)\cap\BB$.
\end{theorem}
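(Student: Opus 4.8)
The plan is to derive unitarity from one $L^2$-norm identity together with the bijectivity already established. By Proposition~\ref{PrAA}, $a\mapsto\wt a$ is a bijection of $\BB$ onto $\AA$, and by Theorem~\ref{Th7} one has $\WW_{\wt a}=a$; hence $F\mapsto\WW_F$ is precisely the inverse bijection $\AA\to\BB$ (in particular it does map into $\BB$), and Corollary~\ref{CoroAA} records the other composition $\wt{\,\WW_F\,}=F$. It therefore only remains to see that this bijection is isometric, and for that it is enough to prove that
$$ \into\into |\wt a(x,y)|^2\,d\mu(x)\,d\mu(y)
 = \into\inth |a(x;\lambda,b)|^2\,d\rho(\lambda,b)\,d\mu(x) $$
for every $a\in\BB$, as an equality of elements of $[0,\infty]$ (so that no integrability need be assumed in advance and Tonelli is always legitimate). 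Granting this: if $F\in L^2\cap\AA$ then $F=\wt{\,\WW_F\,}$, so $\|\WW_F\|_{d\mu\times d\rho}=\|F\|_{d\mu\times d\mu}<\infty$ and $\WW$ carries $L^2\cap\AA$ isometrically into $L^2\cap\BB$; conversely, if $a\in L^2\cap\BB$ then $\wt a\in\AA$, the identity forces $\wt a\in L^2$, and $\WW_{\wt a}=a$, so the map is onto. Hence it is unitary.

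To prove the norm identity I would use the explicit formula $\wt a(x,y)=\check a(m_{xy};\phi_xy)$ from Proposition~\ref{PrAA}, in which $\check a(x;\cdot)$ is the inverse spherical transform of $a(x;\lambda)$ in $\lambda$ and so is $K$-invariant in its second slot. By the Plancherel theorem for the spherical transform, applied in that second slot for each fixed first slot, the right-hand side equals $\into\into|\check a(x;u)|^2\,d\mu(u)\,d\mu(x)$, so the whole matter reduces to the change-of-variables identity
$$ \into\into |h(m_{xy};\phi_xy)|^2\,d\mu(x)\,d\mu(y)
 = \into\into |h(x;u)|^2\,d\mu(x)\,d\mu(u), $$
to hold for every $h(x;u)$ that is $K$-invariant in $u$.

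I would prove this last identity by transforming the left-hand side in three moves. First, for fixed $x$ substitute $y\mapsto s_xy$ (a $d\mu$-preserving isometry): since $m_{x,s_xy}=s_xm_{xy}$ and $\d(\phi_xs_xy)=\d(x,s_xy)=\d(x,y)$, the integrand becomes a function of $s_xm_{xy}$ and $\d(x,y)$ alone. Second, for fixed $x$ substitute $z:=s_xm_{xy}$: by the defining property~(\ref{tJA}) of $J$ together with its $G$-invariance and symmetry (the Proposition and Corollary of Section~3) the image of $d\mu(y)$ is $J(z,x)\,d\mu(z)$, while $\d(x,y)$ turns into a function of $\d(z,x)$ through the complex-distance relations recorded after Proposition~\ref{PrAA}. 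Third, after a Fubini exchange, writing $X$ for the new first variable and $P$ for the second, substitute $P\mapsto\phi_XP$; since $\d(X,\phi_XP)=\d(0,P)=\d(P)$ and $J(X,\phi_XP)=J(0,P)$, the claim collapses, for each fixed $X$, to the statement that the image of the measure $J(0,P)\,d\mu(P)$ under $P\mapsto$ ``a point at complex distance $\frac{2\d(P)}{1+\d(P)^2}$ from the origin'' equals the image of $d\mu$ under $u\mapsto\d(u)$. Passing to the polar coordinates on $\RR^d\cong\overline{K/M\times\a^+}$ of Section~3, and using that $m_{z,0}$ has complex distance $\tau^{-1}(\d(z))$ from the origin, with $\tau(t):=\frac{2t}{1+t^2}$, this final statement is \emph{precisely} the defining property~(\ref{tJA}) of $J$ read off for $K$-invariant test functions and $y=0$; no explicit evaluation of $J$ is needed.

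The main obstacle is exactly this change-of-variables identity: one must pin down how the pair $(m_{xy},\phi_xy)$ is distributed when $(x,y)$ runs over $\Omega\times\Omega$ against $d\mu\times d\mu$, and recognize that the Jacobian it produces is precisely $J$ --- this is what forces the restriction to $\AA$ and $\BB$ in the statement, and it is why $\WW$ cannot be unitary on the whole of $L^2(\Omega\times\Omega)$. A secondary nuisance is keeping the several midpoint and complex-distance parametrizations of $\Omega\times\Omega$ straight (cf.\ the Remark after Proposition~\ref{PrAA}): it helps to fix once and for all the relations among $\d(x,y)$, $\d(m_{xy},x)$ and $\d(m_{xy},s_0m_{xy})$, and to work on the full-measure set where the polar coordinates are genuine coordinates. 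The density of smooth compactly supported symbols, the legality of Tonelli, and a routine extension by continuity (if one would rather avoid $[0,\infty]$-valued integrals) present no difficulty.
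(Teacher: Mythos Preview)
Your reduction to the norm identity is the same as the paper's, and your argument for the change-of-variables identity is correct; it is, however, a genuinely different proof of that identity.

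The paper, after the common Plancherel reduction to
\[
\into\into |\check a(m_{xy},\phi_x y)|^2 \, d\mu(y) \, d\mu(x)
= \into\into |\check a(x;y)|^2 \, d\mu(y) \, d\mu(x),
\]
substitutes $y\mapsto\phi_x y$ on the left, replaces $x,y$ by $kx,ky$ and averages over $k\in K$ on both sides, and then recognizes each side as an integral of the form $\II(z)=\int_K\into F(k\phi_x z)\,d\mu(x)\,dk=\int_G F(gz)\,dg$ (using the parametrization $g=k\phi_x$, $dg=dk\,d\mu(x)$) evaluated at two different points $z$; invariance of the Haar measure on $G$ makes $\II(z)$ constant in $z$, and the identity follows. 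Your route instead performs three explicit substitutions --- $y\mapsto s_x y$, then the midpoint change $y\mapsto z=s_x m_{xy}$ whose Jacobian is exactly $J(z,x)$ by~(\ref{tJA}), then $P\mapsto\phi_X P$ --- and closes the loop by recognizing the final radial integral as (\ref{tJA}) with $y=0$ applied to a $K$-invariant test function. Your approach has the virtue of being entirely internal to $\Omega$: it never invokes the Haar measure on $G$ or the decomposition $dg=dk\,d\mu(x)$, and it makes transparent why the Jacobian $J$ introduced in Section~3 is exactly the object that makes the identity work. The paper's approach is shorter and more conceptual --- one invariance principle does everything --- and it avoids the complex-distance bookkeeping (the relations among $\d(x,y)$, $\d(m_{xy},x)$, and the map $\tau(t)=\tfrac{2t}{1+t^2}$) that your third move relies on. Both arguments make essential use of the $K$-invariance of $\check a$ in its second slot, so both explain equally well why the restriction to $\AA$ and $\BB$ is needed.
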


\begin{proof} In~view of Theorem~\ref{Th7} and Proposition~\ref{PrAA},
it~is enough to show that the inverse map $a\mapsto\wt a$ is an isometry
from $L^2(\Omega\times\Omega^*)\cap\BB$ into $L^2(\Omega\times\Omega)\cap\AA$
(with their respective measures); that~is, that
$$ \into \intah |a(x,\lambda)|^2 \, d\rho(\lambda) \, d\mu(x) =
 \into\into |\wt a(x,y)|^2 \, d\mu(y) \, d\mu(x).   $$
In~course of the proof of Proposition~\ref{Th7}, we~have seen that
$\wt a(x,y)=\check a(m_{xy},\phi_xy)$, where $\check a$ has again
the same meaning as before. By~Plancherel, the desired equality is
therefore equivalent~to
\[ \into\into |\check a(x;y)|^2 \, d\mu(y) \, d\mu(x) =
 \into \into |\check a(m_{xy},\phi_x y)|^2 \, d\mu(y) \, d\mu(x).
 \label{tUA}   \]
In~view of the invariance of the measure~$d\mu$, we~may replace $x$ by~$kx$,
$k\in K$, so~the left-hand side of (\ref{tUA}) equals
\[ \into \into |\check a(kx,y)|^2 \, d\mu(x) \, d\mu(y), \label{tUB}  \]
for any $k\in K$. Similarly, on~the right-hand side of (\ref{tUA}) we~can
replace $y$ by $\phi_x y$ in the inner integral, giving
$$ \into \into |\check a(m_{x,\phi_x y},y)|^2 \, d\mu(y) \, d\mu(x).  $$
Replacing again $x,y$ by~$kx,ky$, $k\in K$, recalling that $\check a$ is
a $K$-invariant function in its second argument, and using the fact that
$\phi_{kx}ky=k\phi_x y$ and $m_{kx,k\phi_xy} = m_{k\phi_x0,k\phi_xy}=k\phi_x
m_{0y}$ (since $\phi_x$ is a Riemannian isometry), we~thus see that the
right-hand side of (\ref{tUA}) is~equal~to
\[ \into \into |\check a(k\phi_x m_{0y},y)|^2 \, d\mu(x) \, d\mu(y),
 \label{tUC}   \]
for any $k\in K$. Since the $k\in K$ in both (\ref{tUB}) and~(\ref{tUC})
can be taken arbitrary, the desired equality (\ref{tUA}) is therefore
actually equivalent~to (as~$x=\phi_x0$)
$$ \int_K \into \into |\check a(k\phi_x0,y)|^2 \, d\mu(x) \, d\mu(y) \, dk =
\int_K \into \into |\check a(k\phi_xm_{0y},y)|^2 \, d\mu(x) \, d\mu(y) \,dk. $$
We~claim that we in fact have even the equality
\[ \postdisplaypenalty1000000
\int_K \into |\check a(k\phi_x0,y)|^2 \, d\mu(x) \, dk =
\int_K \into |\check a(k\phi_xm_{0y},y)|^2 \, d\mu(x) \,dk  \label{tUD}  \]
for any fixed $y\in\Omega$.

Indeed, denote, for brevity, $F(x):=|\check a(x;y)|^2$. For~$z\in\Omega$,
consider the integral
$$ \II(z) := \int_K \into F(k\phi_xz) \, dk \, d\mu(x).    $$
Now any $g\in G$ can be uniquely written in the form $k\phi_x$ with $k\in K$
and $x\in\Omega$ (in~fact, $x=g^{-1}0$ and $k=g\phi_x$), and the measure
$dk\,d\mu(x)$ corresponds under this parameterization to the Haar measure
$dg$ on~$G$. (Recall that~$G$, as~a~semisimple Lie group, is~unimodular,
so~$dg$ is both the left and the right Haar measure.) Thus
$$ \II(z) = \int_G F(gz) \, dg.   $$
For any $g_1\in G$, the invariance of the Haar measure gives
$$ \II(g_1 z) = \int_G F(gg_1z) \,dg = \int_G F(gg_1z) \,d(gg_1) = \II(z).  $$
Since $G$ acts transitively on $\Omega=G/K$, $\II(z)$~is~thus independent
of~$z$. In~particular, $\II(0)=\II(m_{0y})$, proving~(\ref{tUD}) and completing
the proof of the theorem.    \end{proof}

\begin{remark*} The equality (\ref{tUA}) would follow immediately if the
diffeomorphism $(x,y)\mapsto(m_{xy},\phi_xy)$ were measure-preserving.
However, a~simple calculation shows that on the disc
$$ \frac{d\mu(m_{xy}) \, d\mu(\phi_xy)} {d\mu(x) \, d\mu(y)} =
\frac{2-\overline x y-\overline y x}{2|1-\overline x y|}
\sqrt{ \frac{1-|x|^2}{1-|y|^2} } \neq 1,   $$
so~this is not the case even for the unit disc.     \end{remark*}

The~proof of the last theorem again indicates that $\WW$ cannot probably
be expected to act unitarily also on functions $a(x;\lambda,b)$ which are
not independent of~$b$.

\begin{remark*} Note that the class of $K$-invariant functions on~$\Omega$,
and~the corresponding class of the functions on $\Omega^*$ which are
independent of~$b$, play a distinguished role also in the properties of
the Helgason-Fourier transform: for~instance, the convolution $f*g$ of two
functions on $\Omega$ does not in general satisfy $(f*g)^\sim=\tilde f\tilde g$
(which is notorious for the ordinary Fourier transform), however this becomes
true if $g$ is $K$-invariant. This makes the introduction of the two function
classes $\AA$, $\BB$ above quite natural.     \end{remark*}

\section{Concluding remarks}
\subsection{}
We~have been somewhat nonspecific about what kind of functions we are dealing
with: for~instance, the inversion formula for the Helgason-Fourier transform
holds for $f$ smooth with compact support, and extends to $f\in L^2$ only by
Plancherel. There are also analogues of the Schwartz space, one~on $\Omega$
and another one on~$\Omega^*$, such that the Helgason-Fourier transform is
an isomorphism of the former onto the latter; see e.g.~\cite{GaVa}, Chapter~6.
The~rigorously minded reader should think of the functions $a,\wt a$, etc.,
as~belonging to the appropriate tensor products of these Schwartz spaces;
in~that case the convergence of all the integrals involved etc. can be
verified with ease.
Extensions to more general functions, or~even distributions, can be achieved
by the standard techniques used for handling oscillatory integrals
(see e.g.~\cite{Shub}).

\subsection{}
One~might try introducing H\"ormander classes for symbols $a$, and building
an analogue of the usual calculus for the Weyl operators~$\Psi_a$ ---
composition formulas, boundedness in Sobolev spaces,~etc. Some steps in
this direction have been done in Tate~\cite{Tate} for the disc.

A~related theory of pseudodifferential operators (on~the disc, but very likely
extending to any symmetric space of non-compact type), corresponding to the
standard Kohn-Nirenberg, rather than Weyl, pseudodifferential operators in
the Euclidean case, was developed by Zelditch~\cite{Zeld}.
However, expressing our operators $\Psi_a$ as these ``Kohn-Nirenberg''
pseudodifferential operators (thus reducing the questions mentioned in the
previous paragraph to the theory already developed by Zelditch) does not seem
straightforward, even for the special case of functions independent of~$b$.

\subsection{}
The~function class $\AA$ is somewhat mysterious: it~is totally unclear to the
present authors, for~instance, how to characterize the Weyl operators $\Psi_a$
with $\wt a\in\AA$. In~the Euclidean setting, this would correspond to
operators whose Schwartz kernels depend only on $\frac{x+y}2$
and $|x-y|$; even in this case the answer is not obvious.

\subsection{}
Though the authors are convinced that there are no analogues of
Theorem~\ref{Th7} and Theorem~\ref{Th9} for general symbols~$a$
(i.e.~possibly depending on~$b$), we~are unable to provide an
explicit counterexample.

\end{document}